\documentclass[%
 reprint,
 superscriptaddress,
 amsmath,amssymb,
 aps,
 physrev,
]{revtex4-2}

\usepackage{graphicx}
\usepackage{dcolumn}
\usepackage{bm}
\usepackage{mathrsfs}
\usepackage{tabularx}
\usepackage{pgfplots}
\pgfplotsset{compat=1.18}
\usepackage{amsthm}
\usepackage[colorlinks=true,allcolors=black]{hyperref}

\theoremstyle{definition}
\newtheorem{theorem}{Theorem}
\newtheorem{proposition}[theorem]{Proposition}
\newtheorem{lemma}[theorem]{Lemma}
\newtheorem{corollary}[theorem]{Corollary}
\newtheorem{axiom}[theorem]{Axiom}
\newtheorem{definition}[theorem]{Definition}
\newtheorem{example}[theorem]{Example}
\newtheorem{remark}[theorem]{Remark}

\newcommand{\norm}[1]{\lVert#1\rVert}
\newcommand{\abs}[1]{\left\lvert#1\right\rvert}

\begin{document} 

\preprint{APS/123-QED}

\title{A statistical theory of structure in many-particle systems with local interactions}

\author{John \c{C}amk{\i}ran}
\email{john.camkiran@utoronto.ca}
\affiliation{\mbox{Department of Materials Science and Engineering, University of Toronto, Toronto, Ontario, M5S 3E4, Canada}}

\author{Fabian Parsch}
\affiliation{\mbox{Department of Mathematics, University of Toronto, Toronto, Ontario, M5S 2E4, Canada}}

\affiliation{\mbox{Department of Materials Science and Engineering, University of Toronto, Toronto, Ontario,  M5S 3E4, Canada}}

\author{Glenn D. Hibbard}
\affiliation{\mbox{Department of Materials Science and Engineering, University of Toronto, Toronto, Ontario, M5S 3E4, Canada}}

\date{\today}

\begin{abstract}
A theory of structure is formulated for systems of many structureless classical particles with stable local interactions in Euclidean space. Such systems are shown to have their structure in thermodynamic equilibrium determined exactly by a random field of fine local descriptions and approximately by coarsenings thereof. The degree of order in the local cluster consisting of a particle and its neighbors is identified as a universal source of coarse local descriptions and characterized by expressing the behavior of configurational entropy in local microscopic terms. A local measure of the angular redundancy in neighboring particle positions is found to satisfy this characterization and thereby established as a valid local order quantifier. A precise relationship between order and symmetry is obtained by bounding this quantifier sharply from below by a simple function of the local point group and the largest stabilizer under its action on the set of bond pairs. The marginal distribution of the quantifier is given in closed form for highly coordinated particles with broadly distributed bond angles. Applications are made to the ideal gas, perfect crystal, and simple liquid.
\end{abstract}

\maketitle

\section{Introduction}
\label{sec:introduction}

More than a century after the advent of x-ray diffraction, our understanding of structure in condensed matter remains largely phase-specific. With at least one fundamental limitation on what properties can be evaluated by direct simulation \cite{Mehdi2024, Berthier2023, Bonati2021}, a broader theory of structure will be essential to furthering our ability to explain and predict the behavior of many-particle systems.

The early results of crystallography provide a special theory of structure grounded in the geometry of lattices and the algebra of discrete groups \cite{Aroyo2004, Giacovazzo2011, Bradley2010}. Reciprocal-space methods extend the scope of our understanding to noncrystalline systems \cite{Kittel2018, Fischer2006, Torquato2018} yet do so at the expense of spatial resolution \cite{Bernal1959, Sheng2006, Tanaka2019}. To address this shortcoming, such methods have been complemented with real-space descriptors to great practical effect \cite{Stukowski2012, Bartok2013, Royall2015a, Tanaka2019} but without substantial progress toward a general invariant of structure that is at once complete and tractable.

Indeed traditional invariants of structure are beset by the problem that they are either complete or tractable but not in general both. Perfect crystals at absolute zero are one setting in which the former can be attained without forgoing the latter. Outside that setting, the pair correlation function affords some degree of tractability \cite{Hansen2013,Pathria2011}. Yet in neglecting the higher-order dependencies that are often of importance \cite{Coslovich2013, Sharp2018, Zhang2020, Hu2022, Tanaka2025}, such two-particle descriptions lack completeness except in special cases \cite{Jiao2010, Stillinger2019, Maffettone2025}. And while the $n$-particle distribution function is trivially complete for sufficiently large $n$, its estimation suffers acutely from the curse of dimensionality \cite{NaglerCzado2016}.

The use of novel local descriptors has emerged as a promising alternative in the face of the dilemma inherent in traditional approaches \cite{Royall2015a, Tanaka2019}. Despite its empirical successes, however, this practice has yet to be placed on a firm theoretical foundation.

The present work arrives at a collection of results that formally concern systems of many structureless classical particles with stable local interactions but that nevertheless provide a perspective on all assemblies of countably many bodies described entirely by their species, positions, and momenta. Proceeding from classical theory, it is shown that the equilibrium structure of such systems can in principle be determined from little more than sufficiently rich local descriptions. Building on more recent findings, a definition for a local order quantifier is developed with reference to the behavior of configurational entropy as expressed in local microscopic terms. From there we derive a local measure of the angular redundancy in neighboring particle positions, which in satisfying that definition affords a natural means of describing structure at the length scale of interparticle interactions.

The remainder of the paper is organized as follows: Sec.~\ref{sec:structure-order} grounds the work in a treatment of structure and order in thermodynamic equilibrium; Sec.~\ref{sec:extracopularity} introduces a practical local order quantifier in the more general setting, establishes with its help a precise relationship between order and symmetry, and obtains an expression for its marginal distribution in closed form; Sec.~\ref{sec:elementary-materials} closes with a few elementary applications.

\section{Structure and order}
\label{sec:structure-order}

The extent to which the structure of a classical many-particle system can be captured with local descriptions is a question that remains unresolved notwithstanding the progress made \cite{Bernal1964, Baddeley1989, Steinhardt1983, Torquato1990, Boattini2020}. We address this problem in the following discussion by providing a sufficient condition under which local descriptions give rise to a complete structural invariant and consider the approximability of such an invariant when that condition is not strictly satisfied. Later we identify the local degree of order as a source of local descriptions that is universally available.

\subsection{Describing structure}
\label{subsec:describing-structure}

It is a familiar quality of systems with a large number of particles that nontrivial global structures can arise from strictly local interactions. In order to be able to discuss the description of structure in that setting, we must first make the notion of a local interaction precise. Throughout the work we call an interaction \textit{local} if it involves, in addition to a particle, one or more of its neighbors as given by some deterministic rule.

\subsubsection{Configurational probability density}
\label{subsubsec:configurational-probability-density}

Consider a system of $N$ structureless classical particles in Euclidean space $\mathbb{R}^D$ with interactions that are stable and local with respect to a symmetric irreflexive configuration-dependent and isometry-invariant neighbor relation $\sim$. Suppose that no external fields are present.

By its locality, every interaction is supported on a set of particles that induces a radius-$1$ subgraph of the $\sim$-neighbor graph $G(\mathcal{X})$, given for each configuration $\mathcal{X}$ by
\begin{equation*}
    G(\mathcal{X}) = \bigl( I, \, \bigl\{ \{i, j\} \in \tbinom{I}{2} : i \sim j \bigr\}\bigr),
\end{equation*}
where $I=\{1, \dots, N\}$ and $\tbinom{I}{2}$ is the set of all (unordered) pairs in $I$. A few radius-$1$ graphs are illustrated in Fig.~\ref{fig:radius-one-graphs}.

Denote by $\mathcal{R}(G(\mathcal{X}))$ the set comprising the vertex sets of all induced radius-$1$ subgraphs of $G(\mathcal{X})$ that support an interaction. The potential energy is then given by
\begin{equation}
    U(\mathcal{X}) = \sum_{R\in \mathcal{R}(G(\mathcal{X}))} \Psi(\mathcal{X}_R), 
\end{equation}
where $\mathcal{X}_R$ is the restriction of the configuration $\mathcal{X}$ to the set $R$ and $\Psi$ is an isometry-invariant interaction potential. By stability $U(\mathcal{X}) \geq CN$ for some $C \in (-\infty, 0]$ for all $\mathcal{X}$ \cite{Fisher1966}. We require also that $\Psi(\mathcal{X}_R) \geq c_R$ for some $c_R \in (-\infty, 0]$ for every $R$. Thus standard pair, multibody, and embedded-atom potentials are all possible $\Psi$.

If the system is in canonical equilibrium at temperature $T$ while confined to a region of volume $V$, then its configurational probability density function $f_X$ is given for all admissible configurations $\mathcal{X} \in \Gamma \subseteq (\mathbb{R}^D)^N$ by 
\begin{align}
    f_X(\mathcal{X}) &= \frac{1}{Z} \exp( -\beta U(\mathcal{X}) ) \nonumber \\
    &\propto  \exp\biggl\{-\beta \sum_{R \in \mathcal{R}(G(\mathcal{X}))} \Psi(\mathcal{X}_R) \biggr\}  \nonumber \\
    &=  \prod_{R \in \mathcal{R}(G(\mathcal{X}))} \exp ( -\beta \Psi(\mathcal{X}_R))   \nonumber \\
    &= \prod_{R \in \mathcal{R}(G(\mathcal{X}))} \psi(\mathcal{X}_R),
    \label{eq:boltzmann-factorization}
\end{align}
where $Z = \int_{\Gamma} \exp(-\beta U(\mathcal{X})) d\mathcal{X}$ is the canonical configurational partition function, $\beta = 1/(k_BT)$ is the inverse temperature, $k_B$ is the Boltzmann constant, and $\psi(\mathcal{X}_R): = \exp(-\beta\Psi(\mathcal{X}_R))$ is the Boltzmann factor.

\begin{figure}[t]
    \centering
    \includegraphics[width=\columnwidth]{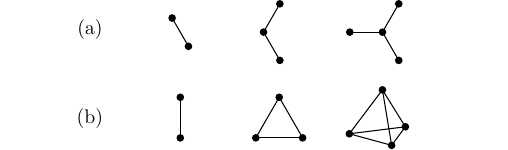}
    \caption{Radius-$1$ graphs include (a) star graphs and (b) complete graphs, but also wheel graphs, fan graphs, and others.}
    \label{fig:radius-one-graphs}
\end{figure}

\subsubsection{Fine local descriptions}
\label{subsubsec:fine-local-descriptions}

Structural descriptions are only truly local if they cannot be used to reconstruct the global configuration given the neighbor graph and are fine if they recover the Boltzmann factors. We formalize these notions below.

Let $w = (w_i)_{i \in I}$ be an indexed family of particle-centered descriptions for a configuration $\mathcal{X}$. The descriptions $w_i$ are \textit{local} if they are the images $g(\mathcal{X}_i)$ of the restricted configurations
\begin{equation*}
    \mathcal{X}_i := (x_j)_{j = i \,\text{or}\, j \sim i}
\end{equation*}
under a common function $g$ for all $\mathcal{X} \in \Gamma$ such that $(w, G(\mathcal{X}))$ does not in general determine $\mathcal{X}$ for connected $G(\mathcal{X})$ up to Euclidean motions. The descriptions $w_i$ are \textit{fine} if there exists a common function $\tau$ for all $\mathcal{X} \in \Gamma$ satisfying, for every $R \in \mathcal{R}(G(\mathcal{X}))$, the equality
\begin{equation}
    \tau(w_R) = \psi(\mathcal{X}_R), \quad w_R := (w_i)_{i \in R}.
    \label{eq:tau-map}
\end{equation}

\begin{figure}[b]
    \centering
    \includegraphics[width=\columnwidth]{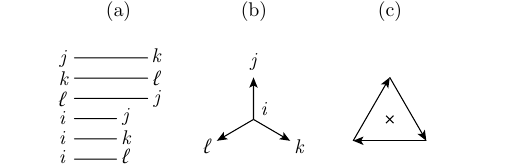}
    \caption{Three classes of structural descriptions: (a) labeled pairwise distances (fine local); (b) labeled neighbor vectors (fine nonlocal); (c) inter-neighbor vectors (non-fine local).}
    \label{fig:three-description-classes}
\end{figure}

An example of fine local descriptions is given by the sets of all labeled pairwise distances near each particle $i$,
\begin{equation*} 
    \mathcal{D}_i = \{ ( \{j,k\}, \norm{x_j - x_k}) : j \sim i, (k = i \text{ or } k \sim i), k \neq j \}. 
\end{equation*}
Indeed by classical distance geometry such descriptions recover, up to Euclidean motions, the restricted configuration $\mathcal{X}_i$ and therefore the Boltzmann factors $\psi(\mathcal{X}_R)$ but do not in general determine the global configuration $\mathcal{X}$. By contrast the sets of all labeled neighbor vectors,
\begin{equation*}
    \mathcal{V}_i = \{ ( j,  x_j - x_i )  : j \sim i \},
\end{equation*}
are fine but not local, since $((\mathcal{V}_i)_{i\in I}, G(\mathcal{X}))$ determines $\mathcal{X}$ up to a translation if $G(\mathcal{X})$ is connected. 
Fig.~\ref{fig:three-description-classes} illustrates these two classes of descriptions along with a third.

Together Eqs.~\eqref{eq:boltzmann-factorization} and \eqref{eq:tau-map} imply that
\begin{align}
    f_X(\mathcal{X}) 
    &= \frac{1}{Z} \prod_{R \in \mathcal{R}(G(\mathcal{X}))} \psi(\mathcal{X}_R) \nonumber \\
    &\propto \prod_{R \in \mathcal{R}(G(\mathcal{X}))} \tau(w_R) \nonumber \\
    &=: \mathcal{T}(w, G(\mathcal{X})).
\end{align}
In other words, for a fixed $(N, V, T)$ the density function $f_X$ and hence the full equilibrium structure of the system is determined (pointwise) by the random field $(g(X_i))_{i \in I}$ of local descriptions $g(X_i)$ on the random graph $G(X)$.

\subsubsection{Coarse local descriptions}
\label{subsubsec:coarse-local-descriptions}

Most practical local descriptions are not fine in the strict sense above. We therefore also consider the situation in which one has descriptions that are only approximately fine.

Let $w'$ be a random field of non-fine or \textit{coarse} local descriptions $w'_i = g'(X_i)$ with their $\sigma$-algebras $\sigma(w'_i) \subseteq \sigma(w_i)$, where each $w_i = g(X_i)$ is a fine local description. Then the $L^2$-optimal extension of $\tau$ to coarse local description tuples $w_R' := (w'_i)_{i \in R}$ satisfies
\begin{equation}
    \tau(w'_R) = \mathbb{E}[\psi(X_R) \mid \sigma(w'_R)] \quad \text{a.s.}
\end{equation}
for every $R \in \bigcup  \operatorname{supp}(\mathcal{R}(G(X)))$, where square integrability follows from the lower bound $\Psi(\mathcal{X}_R) \geq c_R > -\infty$. That is to say, for any $\sigma(w'_R)$-measurable square-integrable function $t$, one has
\begin{equation*}
    \mathbb{E}[(\psi(X_R) - \tau(w'_R))^2] \leq \mathbb{E}[(\psi(X_R) - t(w'_R))^2].
\end{equation*}

Consider a sequence of random fields of coarse local descriptions $w_i^{(n)}$ with their $\sigma$-algebras forming increasing filtrations $\sigma(w^{(n)}_i) \uparrow \sigma(w_i)$. We have $\sigma(w^{(n)}_R) = \bigvee_{i \in R} \sigma(w_i^{(n)})$ and hence $\sigma(w^{(n)}_R) \uparrow \sigma(w_R)$. Then
\begin{equation*}
    \tau(w^{(n)}_R) := \mathbb{E}[\psi(X_R) \mid \sigma(w^{(n)}_R)]
\end{equation*}
is a martingale with respect to the filtration $\sigma(w^{(n)}_R)$. 
By martingale convergence, for each fixed subset $R \subseteq I$,
\begin{align*}
    \tau(w^{(n)}_R) &= \mathbb{E}[\psi(X_R) \mid \sigma(w^{(n)}_R)] \\
    &\xrightarrow[n\to\infty]{a.s.} \mathbb{E}[\psi(X_R) \mid \sigma(w_R)].
\end{align*}
Moreover, by fineness, for every $R \in \bigcup  \operatorname{supp}(\mathcal{R}(G(X)))$, we have $\psi(X_R) =\mathbb{E}[\psi(X_R) \mid \sigma(w_R)]$ a.s.
Since there are finitely many subsets $R \subseteq I$, and a finite union of null sets is itself null, the product over $R \in \mathcal{R}(G(X)) \subset 2^I$ converges in like manner:
\begin{align}
    \prod_{R \in \mathcal{R}(G(X))} \tau(w^{(n)}_R)
    &\xrightarrow[n \to \infty]{a.s.} \prod_{R \in \mathcal{R}(G(X))} \psi(X_R) \nonumber \\
    &= \exp(-\beta U(X)) \nonumber \\
    &= Z f_X(X).
    \label{eq:integrand-convergence}
\end{align}

Define the normalizer
\begin{equation*}
    Z_n = \int_{\Gamma} \prod_{R \in \mathcal{R}(G(\mathcal{X}))}\tau(w^{(n)}_R) d\mathcal{X}.
\end{equation*}
From $\Psi(X_R) \geq c_R > -\infty$ we have $\psi(X_R) \leq \exp(-\beta c_R)$. The monotonicity of the conditional expectation gives
\begin{align*}
    0 &\leq \tau(w^{(n)}_R) \\
      &=\mathbb{E}[\psi(X_R) \mid \sigma(w^{(n)}_R)] \\
      &\leq \exp(-\beta c_R) \quad \text{a.s.}
\end{align*}
It follows that
\begin{equation*}
    \prod_{R \in \mathcal{R}(G(\mathcal{X}))}\tau(w^{(n)}_R) \leq \prod_{R \in \mathcal{R}(G(\mathcal{X}))}\exp(-\beta c_R) < \infty.
\end{equation*}

Because on $\Gamma$ we have $f_X > 0$ $d\mathcal{X}$-a.e., any statement about $X$ that holds a.s. also holds  $d\mathcal{X}$-a.e. on $\Gamma$. 

Since $N < \infty$, the bound $\prod_{R \in \mathcal{R}(G(\mathcal{X}))} \exp(-\beta c_R) \leq \prod_{R \subseteq I} \exp(-\beta c_R)$ is uniform on $\Gamma$. And since $V < \infty$ also, the product $\prod_{R \subseteq I} \exp(-\beta c_R)$ is integrable on $\Gamma$. Then by dominated convergence and Eq.~\eqref{eq:integrand-convergence},
\begin{align}
    Z_n 
    &= \int_{\Gamma} \prod_{R \in \mathcal{R}(G(\mathcal{X}))} \tau(w^{(n)}_R)  d\mathcal{X} \nonumber \\
    &\xrightarrow[n\to\infty]{} \int_{\Gamma} \prod_{R \in \mathcal{R}(G(\mathcal{X}))} \psi(\mathcal{X}_R) d\mathcal{X} \nonumber \\
    &= \int_\Gamma Z f_X(\mathcal{X})d\mathcal{X} \nonumber \\
    &= Z.
    \label{eq:normalizer-convergence}
\end{align}

Combining Eqs.~\eqref{eq:integrand-convergence} and \eqref{eq:normalizer-convergence} with the strict positivity of the normalizer $Z_n$, we arrive at
\begin{equation*}
    \frac{1}{Z_n} \mathcal{T}(w^{(n)}, G(\mathcal{X})) \xrightarrow[n \to \infty]{d\mathcal{X}\text{-a.e.}} f_X(\mathcal{X}).
\end{equation*}
Thus for a fixed $(N, V, T)$ the density function $f_X$ and hence the full equilibrium structure is determined approximately by the random field $(g^{(n)}(X_i))_{i \in I}$ of coarse local descriptions $g^{(n)}(X_i)$ on the random graph $G(X)$.

\subsection{Quantifying order}
\label{subsec:quantifying-order}

In the process of endowing a system with structure, interparticle interactions prevent its constituent particles from assuming statistically independent positions \cite{Torquato1990, Tanaka2012, Chiu2013}. This effect, known as order, is not peculiar to crystals as the frequent interchangeable use of the terms 
``noncrystalline'' and ``disordered'' might suggest but is also present in liquids and glasses, though in a much subtler form \cite{Steinhardt1983, Martin2002, Aste2005, Sheng2006, Tanaka2012, Tanaka2019}. In pure systems and ideal mixtures, order is strictly spatial, whereas more generally it may also be chemical \cite{Bragg1934, Cowley1950, Kikuchi1951, Sanchez1984}. Below we consider the degree of spatial order in the local cluster comprising a particle and its neighbors as a universal source of coarse local descriptions.

\subsubsection{The degree of order}
\label{subsubsec:the-degree-of-order}

In the study of phase transitions, order parameters serve to describe the spontaneous breaking of a Hamiltonian symmetry due to the onset of a specific kind of ordering \cite{Landau1980a, Sethna2021}. By contrast, order quantification seeks to report the total degree of order in a system as evidenced by its departure from complete randomness \cite{Chiu2013, Baddeley2015, vanLieshout2000}. Thus order quantifiers make natural order parameters, but the converse is not necessarily true. 

As a first step toward making the notion of a local order quantifier precise, we introduce the degree of order $\Omega$ via an axiom that generalizes to off-lattice many-particle systems the most basic properties of absolute net magnetization per site in zero-field Ising ferromagnets \cite{Landau1980b, Pathria2011}, which as illustrated in Fig.~\ref{fig:ising-ferromagnet} is perhaps the simplest setting in which structural order can arise.

\begin{axiom}
For all configurations $\mathcal{X}$, orthogonal matrices $Q$, and translation vectors $t$,
\par\medskip
\renewcommand{\arraystretch}{1.2}
\begin{tabular}{l@{\hspace{5pt}}l@{\hspace{16pt}}l}
    (a) & $\mathcal{X} \mapsto \Omega(\mathcal{X})$ & (microscopicity) \label{prop:microscopicity}  \\
    (b) & $0 \leq \Omega(\mathcal{X}) < \infty$ & (finite nonnegativity) \label{prop:nonnegativity}  \\
    (c) & $\Omega(\mathcal{X}) = \Omega(Q\mathcal{X} + t)$ &  (Euclidean invariance) \label{prop:invariance}
\end{tabular}
\label{ax:order-1}
\end{axiom}

\begin{figure}[b]
    \centering
    \includegraphics[width=\columnwidth]{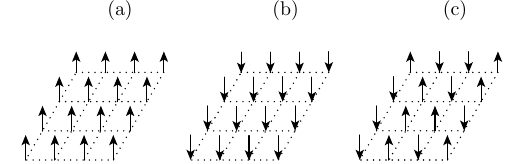}
    \caption{Absolute net magnetization per site in a zero-field Ising ferromagnet: In its ground state, the system assumes configuration (a) or its inversion (b) with probability one, giving $\abs{M}/N = 1$ (maximum order). Above the Curie point, each spin points in either direction with equal probability, giving $\langle \abs{M}/N \rangle = 0$ as typified in (c), with mutually independent spins as $T \to \infty$ (complete randomness).}
    \label{fig:ising-ferromagnet}
\end{figure}

Owing to the tendency of positional correlations to concentrate the configurational probability density, the degree of (dis)order in a system is closely related to its configurational entropy \cite{Nettleton1958, Baranyai1989, Angelani2007, Banerjee2014, Martiniani2016, Zhou2016, Yang2016, Piaggi2017, Hallett2018}. Indeed the behavior of configurational entropy in simple many-particle systems, such as bulk argon, provides a valuable macroscopic point of reference on the average microscopic degree of order. We capture this behavior in a second axiom, whenceforth we continue to write $N$ for the number of particles, $V$ for the volume, and $T$ for the temperature, but now also write $s$ for the configurational entropy per particle and $\langle \cdot \rangle$ for the ensemble average.

\begin{axiom}
For bulk argon in the single-phase region, away from the critical point,
\par\medskip
\renewcommand{\arraystretch}{1.2}
\begin{tabular}{l@{\hspace{5pt}}l@{\hspace{7pt}}l}
    (a) & $(\partial \langle\Omega\rangle / \partial s)_{N, T} < 0$ & (const-$T$ monotonicity in $s$) \\
    (b) & $(\partial \langle\Omega\rangle / \partial s)_{N, V} < 0$ & (const-$V$ monotonicity in $s$)
\end{tabular}
\label{ax:order-2}
\end{axiom}

In sum, Axioms~\ref{ax:order-1} and \ref{ax:order-2} assert that the degree of order in a many-particle system is given by a nonnegative real-valued function of the prevailing configuration with the additional properties that it is preserved under Euclidean motions and that its ensemble average in bulk argon decreases with increasing configurational entropy per particle at constant temperature or volume.

\subsubsection{Local order quantifier}
\label{subsubsec:local-order-quantifier}

It is possible to regard the local cluster consisting of a central particle $i$ and its neighbors $j \sim i$ as a system in its own right. Under this view, Axiom~\ref{ax:order-1}(a) implies that the degree of spatial order in a local cluster is prescribed by some function $\omega$ to be equal to $\Omega(\mathcal{X}_i)$, where $\mathcal{X}_i$ is the restriction of the configuration $\mathcal{X}$ to that cluster.

Let $\hat{\mathcal{V}}_i$ denote the set of all unlabeled neighbor vectors of the $i$th particle and consider its orbit under the action of the $D$-dimensional orthogonal group $O(D)$. Fix a choice of representative $c$ and call the chosen representative $B_i = c(O(D)\hat{\mathcal{V}}_i)$ a \textit{bond set} \footnote{One choice can be obtained via singular value decomposition and the Gram matrix of the neighbor vector matrix}. Then by Axiom~\ref{ax:order-1}(c) there exists a function $\Phi$ satisfying $\Omega(\mathcal{X}_i) = \Phi(B_i)$. From this and Axiom~\ref{ax:order-1}(b) it is immediate that
\begin{equation}
    B_i \mapsto \omega(B_i), \quad 0 \leq \omega(B_i) < \infty.
    \label{eq:little-omega}
\end{equation}

We check before proceeding that $\omega(B_i)$ is a valid local description per Sec.~\ref{subsubsec:fine-local-descriptions}. In representing the orthogonal orbit $O(D)\hat{\mathcal{V}}_i$ with the bond set $B_i$, we omit the orientations of the underlying neighbor vectors. Consequently the pair $((B_i)_{i \in I}, G)$ does not in general determine the global configuration $\mathcal{X}$, which implies that $B_i$ is a local description. Thus $\omega(B_i)$, being the image of a local description, must itself be a local description.

In experiment and simulation, the microscopic character of a local cluster is frequently found to be dominated by two geometric features: the number of bonds formed by the central particle or \textit{\mbox{coordination} number} $k$ \cite{Bernal1967, Torquato2000, Torquato2010, Zaccone2011, Schoenholz2016, Laurati2017, Xia2017, Mizuno2020, Bapst2020, Zaccone2022, Anzivino2023} and the angles between those bonds or \textit{bond angles} $\Theta$ \cite{Coslovich2011, Tanaka2012, Coslovich2013, Cubuk2015, Russo2016, Sharp2018, Tong2018, Harrington2019, Zhang2020, Boattini2021, Pozdnyakov2022}. Following these observations, we proceed with the localization of Axiom~\ref{ax:order-2}(a) and (b) by studying the ensemble behaviors of $k$ and $\Theta$.

Let us start by considering the quasistatic compression of liquid argon at constant temperature $T$. The configurational entropy $S$ of such a system \mbox{decreases} with an isothermal decrease in volume $V$ at constant $N$, so that
\begin{equation}
    \frac{\partial s}{\partial \rho} = \frac{1}{N} \frac{\partial S}{\partial V} \frac{\partial V}{\partial \rho} = -\frac{1}{\rho^2} \frac{\partial S}{\partial V} <0,
    \label{eq:entropy-density}
\end{equation}
where $\rho = N/V$ is the number density. 

What we seek below is the direction of the concomitant change in the ensemble-average coordination number $\langle k \rangle$.

Recall that the radial distribution function (RDF) is a nonnegative function $g$ given in the thermodynamic limit for radial distances $r > 0$ and number densities $\rho > 0$ by
\[
    g(r; \rho) = \lim_{\substack{N,V \to \infty \\ N/V = \rho}} \frac{V}{N(N-1)} \Biggl\langle \sum_{i} \sum_{j\neq i} \frac{\delta(r-\norm{x_i - x_j})}{4 \pi r^2} \Biggr\rangle.
\]
Under homogeneous and isotropic conditions such as here considered, $\langle k \rangle$ is obtained in that limit as follows:
\begin{equation*}
    \langle k \rangle = 4 \pi \rho \int_{0}^{r_1} r^2 g(r; \rho)  dr,
\end{equation*}
where $r_1$ is the $\rho$-dependent location of the first minimum of $g$. We reduce this to an integral over the unit interval by making the change of variables
\begin{equation*}
    y = r/r_1.
\end{equation*}

Let $Y_1 < 1$ be the reduced location of the first maximum. Define $\Lambda$ as a triangular function with apex at $Y_1$, left width $0 < W_L < Y_1$, and right width $W_R = 1 -Y_1$:
\begin{equation*}
    \Lambda(y) =
    \begin{cases}
        0 & y \leq Y_1- W_L \\
        1 - \dfrac{Y_1-y}{W_L} &  Y_1 - W_L < y \leq Y_1  \\[1.5ex]
        1 - \dfrac{y-Y_1}{W_R} &  Y_1 < y \leq Y_1 + W_R \\
        0 & y > Y_1 + W_R = 1.
    \end{cases}
\end{equation*}

Denote by $A(\rho) = g(Y_1 r_1; \rho)$ the height of the first RDF peak, which is observed empirically to satisfy
\begin{equation*}
    \frac{\partial A}{\partial \rho} > 0.
\end{equation*}
We may then model the first peak as
\begin{equation*}
    g_\Lambda(r; \rho) = A(\rho)\Lambda(r/r_1), \quad 0 \leq r \leq r_1,
\end{equation*}
and approximate the average coordination number by
\begin{align*}
    \langle k \rangle _\Lambda(\rho) 
    &= 4\pi\rho \int_0^{r_1} r^2 {g}_{\Lambda}(r; \rho) dr \\
    &= 4\pi\rho r_1^3\int_0^1 y^2 {g}_{\Lambda}(yr_1; \rho) dy \\
    &= 4\pi\rho r_1^3 A(\rho) C,
\end{align*}
where
\begin{align*}
    C &= \int_0^{1} y^2 \Lambda(y) dy \\
      &= \frac{W_L^3 + W_R^3}{12} + \frac{Y_1^2}{2}(W_L+W_R) + \frac{Y_1}{3}(W_R^2-W_L^2) > 0.
\end{align*}

We express the difference $\Delta$ between the true average $\langle k \rangle$ and model average $\langle k \rangle_\Lambda$ in terms of the error $\epsilon$ of the peak model $g_\Lambda$ with respect to the true first peak of $g$:
\begin{align*}
    \Delta(\rho) 
    &= \langle k \rangle (\rho) - \langle k \rangle_\Lambda(\rho) \\
    &= 4\pi \rho \int_0^{r_1} r^2 [g(r;\rho) - A(\rho)\Lambda(r/r_1)]dr \\
    &= 4\pi \rho r_1^3 \int_0^1 y^2 [g(yr_1; \rho) - A(\rho)\Lambda(y)] dy \\
    &= 4\pi \rho r_1^3 A(\rho) \int_0^1 y^2 \biggl(\frac{g(yr_1; \rho)}{A(\rho)} - \Lambda(y)\biggr) dy \\
    &= 4\pi \rho r_1^3 A(\rho) \underbrace{\int_0^1 y^2 \epsilon(y; \rho) dy}_I.
\end{align*}

Over small changes in density, interparticle distances scale to leading order as the inverse cube root of density, so that $r_1 \approx c\rho^{-1/3}$ for some $c>0$. Insofar as the triangle $\Lambda(y)$ agrees with the $A$-scaled first peak $g_\Lambda(yr_1; \rho)/A(\rho)$, the error $\epsilon$ and hence the integral $I$ will be small. And since quasistatic changes in the density of a simple liquid result primarily in the horizontal and vertical dilation of the first RDF peak, $\epsilon$ and $I$ will also be slow with respect to $\rho$, rendering $\partial I/\partial \rho$ likewise small. Thus
\begin{align}
    \frac{\partial \langle k \rangle}{\partial \rho}
    &= \frac{\partial}{\partial \rho}(\langle k \rangle_\Lambda(\rho) + \Delta(\rho)) \nonumber \\
    &= 4\pi \frac{\partial}{\partial \rho}(\rho r_1^3 A(\rho)\bigl(C+I)) \nonumber \\
    &= 4\pi \biggl\{ A(\rho)\bigl(C+I\bigr)\frac{\partial}{\partial \rho}\bigl(\rho r_1^3\bigr) \nonumber \\
    &\hphantom{=4\pi \biggl\{}\, +\rho r_1^3\biggl(\frac{\partial A}{\partial \rho}\bigl(C+I\bigr)
    + A(\rho)\frac{\partial I}{\partial \rho}\biggr)\biggr\} \nonumber \\
    &\approx 4\pi c^3\biggl(\frac{\partial A}{\partial \rho}\bigl(C+I\bigr)
    + A(\rho)\frac{\partial I}{\partial \rho}\biggr) \nonumber \\
    &= 4\pi c^3\biggl( C \frac{\partial A}{\partial \rho} + \frac{\partial A}{\partial \rho} I + A(\rho)\frac{\partial I}{\partial \rho}\biggr) \nonumber \\
    &\approx 4\pi c^3 C \frac{\partial A}{\partial \rho} \nonumber \\
    &> 0.
    \label{eq:coordination-density}
\end{align}

Ineqs.~\eqref{eq:entropy-density} and ~\eqref{eq:coordination-density} together with Axiom~\ref{ax:order-2}(a) imply that the ensemble-average degree of spatial order in a local cluster increases with the ensemble-average coordination number of its central particle:
\begin{equation}
    \frac{\partial \langle \omega \rangle}{\partial \langle k \rangle} = \frac{\partial \langle \Omega \rangle}{\partial s} \frac{\partial s}{\partial \rho} \biggl(\frac{\partial \langle k \rangle}{\partial \rho}\biggr)^{-1} > 0. 
    \label{ineq:average-coordination} \tag{A}
\end{equation}

Let us now consider the quasistatic heating of crystalline argon at constant volume $V$. In canonical equilibrium at $T>0$ the constant-volume specific heat capacity $c_V$ of any system with more than one energetically distinct accessible microstate satisfies
\begin{align}
    c_V = T\frac{\partial s}{\partial T} > 0.
    \label{ineq:heat-capacity}
\end{align} 

Up to very high pressures, solid argon is known to have an fcc crystal structure \cite{Dewaele2021}. With the coordination number of the central particle thus fixed at $k=12$, our attention is turned to its bond angles, each belonging to one of $m=4$ zero-temperature classes: $60^\circ$, $90^\circ$, $120^\circ$, $180^\circ$.

Every marginal of a Gibbs measure can be written in exponential form with respect to an effective potential or potential of mean force (PMF) \cite{Chandler1987}. The partial bond angle distribution $f_{\Theta_i}$ corresponding to the $i$th angle class is given by
\begin{equation*}
    f_{\Theta_i}(\vartheta) = \frac{\exp(-\beta U_{\Theta_i}(\vartheta; T))}{Z_{\Theta_i}(T)},
\end{equation*}
where $U_{\Theta_i}$ is the PMF along the $i$th angular coordinate $\Theta_i$ and $Z_{\Theta_i}$ is the corresponding normalizer.

The diversity of the bond angles in each class can be assessed by the bond angle entropy $h(\Theta_i)$, evaluated as
\begin{equation*}
\begin{aligned}
    h(\Theta_i) 
    &= -\int f_{\Theta_i}(\vartheta) \ln (f_{\Theta_i}(\vartheta)) d\vartheta \\
    &= -\int f_{\Theta_i}(\vartheta) [-\beta U_{\Theta_i}(\vartheta; T) - \ln(Z_{\Theta_i}(T))] d\vartheta \\
    &= \beta\int \! U_{\Theta_i}(\vartheta; T) f_{\Theta_i}(\vartheta) d\vartheta + \ln(Z_{\Theta_i}(T))\!\int \!\!f_{\Theta_i}(\vartheta) d\vartheta \\
    &= \beta \langle U_{\Theta_i} \rangle + \ln(Z_{\Theta_i}(T)).
\end{aligned}
\end{equation*}
Differentiating this with respect to temperature yields
\begin{align*}
    \frac{\partial h(\Theta_i)}{\partial T} 
    &= \frac{\partial h(\Theta_i)}{\partial \beta}\frac{\partial \beta}{\partial T} \nonumber \\ 
    &= \biggl(\beta \frac{d \langle U_{\Theta_i} \rangle}{d \beta} - \beta \biggl\langle \frac{\partial U_{\Theta_i}}{\partial \beta} \biggr\rangle \biggr) \biggl( - \frac{1}{k_B T^2} \biggr) \nonumber \\
    &= \frac{1}{k_B^2 T^3} \biggl( \biggl\langle \frac{\partial U_{\Theta_i}}{\partial \beta} \biggr\rangle - \frac{d \langle U_{\Theta_i} \rangle}{d \beta} \biggr) \nonumber \\ 
    &> 0
\end{align*}
in the following two cases.

At low $T>0$, small thermal displacements from the fcc arrangement place the crystal in the harmonic regime, in which each angular PMF is separable as $U_\Theta(\vartheta; T) = \mathcal{U}_\Theta(\vartheta) +  C(T) + \Delta(\vartheta; T)$ with $\Delta(\vartheta; T) \approx 0 $ and $\partial \Delta / \partial \beta \approx 0$. We have in this case that
\begin{align*}
    \biggl\langle \frac{\partial U_\Theta}{\partial \beta} \biggr\rangle - \frac{d \langle U_\Theta \rangle}{d \beta}
    &= C'(T) + \operatorname{Var}(\mathcal{U}_\Theta) - C'(T)\\
    &= \operatorname{Var}(\mathcal{U}_\Theta)\\
    &> 0.
\end{align*}
At higher $T$ away from the melting point $T_m$, 
\begin{align*}
    \biggl\langle \frac{\partial U_\Theta}{\partial \beta} \biggr\rangle - \frac{d \langle U_\Theta \rangle}{d \beta}
    &= \operatorname{Var}(U_\Theta)+\beta\operatorname{Cov}(U_\Theta,\partial_\beta U_\Theta) \\
    &\ge \operatorname{Var}(U_\Theta)-\beta\sqrt{\operatorname{Var}(U_\Theta)\operatorname{Var}(\partial_\beta U_\Theta)} \\
    &= \operatorname{Var}(U_\Theta) \Biggl(1-\beta\sqrt{\frac{\operatorname{Var}(\partial_\beta U_\Theta)}{\operatorname{Var}(U_\Theta)}}\Biggr) \\
    &= \operatorname{Var}(U_\Theta) \Biggl(1-\sqrt{\frac{\operatorname{Var}(T S_\Theta)}{\operatorname{Var}(U_\Theta)}}\Biggr) \\
    &> 0,
\end{align*}
where the first relation is due to the identity $d\langle U\rangle /d\beta =\langle \partial U /\partial \beta \rangle-\operatorname{Var}(U)-\beta\operatorname{Cov}(U,\partial_\beta U)$ \cite{Tuckerman2010}; the second follows from the Cauchy--Schwarz inequality; the third inserts $S_\Theta := -{\partial U_\Theta(\vartheta; T)}/{\partial T}$; and the fourth holds if $\operatorname{Var}(T S_\Theta)<\operatorname{Var}(U_\Theta)$, 
as one is led to expect from the leading-order behaviors of $U_\Theta$ and $TS_\Theta$,
\begin{align*}
    U_\Theta(\vartheta; T) &\approx U_\Theta(\vartheta_0; T) + \frac{1}{2} \kappa (\vartheta_0)(\vartheta - \vartheta_0)^2, \\
    T S_\Theta(\vartheta; T) &\approx T S_\Theta(\vartheta_0; T) + \frac{k_B T}{2} \ln\mathopen{}\biggl( \frac{\kappa ( \vartheta)}{\kappa(\vartheta_0)} \biggr);
\end{align*}
where $\kappa(\vartheta) = \partial^2 U_\Theta / \partial \vartheta^2$ is the effective stiffness.

Away from $T_m$, the angle classes are nearly disjoint, so that the class-wise result carries over to the full bond angle distribution $f_\Theta := \sum_i \alpha_i f_{\Theta_i}$ with error $\epsilon(T)$:
\begin{align}
    \frac{\partial h(\Theta)}{\partial T}
    &= \frac{\partial}{\partial T} \biggl( \sum_{i=1}^m \alpha_i h(\Theta_i) + \sum_{i=1}^m \alpha_i \ln \alpha_i - \epsilon(T) \biggr) \nonumber \\
    &\approx \sum_{i=1}^m \alpha_i \frac{\partial h(\Theta_i)}{\partial T} \nonumber \\
    &> 0, \label{ineq:temperature-dependence}
\end{align}
where the exact equality follows from the entropy of a disjoint mixture \cite[Lemma~2.5.2]{Gray2011} and the approximate equality from $\partial \alpha_i / \partial T = 0$ (fixed class weights) and $\partial \epsilon / \partial T \approx 0$ (slowly increasing error due to class overlaps).
    
The combination of Axiom~\ref{ax:order-2}(b) with Ineqs.~\eqref{ineq:heat-capacity} and \eqref{ineq:temperature-dependence} indicates that the ensemble-average degree of spatial order in a local cluster decreases as the bond angles of the central particle become more diverse:
\begin{align}
    \frac{\partial \langle \omega \rangle}{\partial h(\Theta)} &= \frac{\partial \langle \Omega \rangle}{\partial s} \frac{\partial s}{\partial T}
    \frac{\partial T}{\partial h(\Theta)} \nonumber \\ 
    &= \frac{c_V}{T}\frac{\partial \langle \Omega \rangle}{\partial s}
    \biggl(\frac{\partial h(\Theta)}{\partial T}\biggr)^{-1} \nonumber \\ &< 0. 
    \label{ineq:bond-angle-entropy} \tag{B}
\end{align}

Ineqs.~\eqref{ineq:average-coordination} and \eqref{ineq:bond-angle-entropy} recast Axiom~\ref{ax:order-2}(a) and (b) in terms of the ensemble behaviors of the local features $k$ and $\Theta$ respectively. All that remains is to consider the microscopic contributions to these effects.

It can be shown that an increase in the average degree of spatial order $\langle \omega \rangle$ due  to an increase in the average coordination number $\langle k \rangle$ is implied by an increase in the degree of spatial order $\omega$ due to an increase in the coordination number $k$:
\begin{equation}
    \frac{\partial \omega}{\partial k} > 0 \quad \Rightarrow \quad \frac{\partial \langle \omega \rangle}{\partial \langle k \rangle} > 0.
\end{equation}

It can likewise be shown that a decrease in $\langle \omega \rangle$ due to an increase in the ensemble bond angle entropy $h(\Theta)$ is implied by a decrease in $\omega$ due to an increase in the microscopic bond angle entropy $H(\theta)$, defined in Sec.~\ref{subsec:parameter}:
\begin{equation}
    \frac{\partial \omega}{\partial H(\theta)} < 0 \quad \Rightarrow \quad \frac{\partial \langle \omega \rangle}{\partial h(\Theta)} < 0.
\end{equation}

With both Axioms~\ref{ax:order-1} and \ref{ax:order-2} now expressed in local microscopic terms, we summarize the derived properties in the following definition.

\begin{definition}[local order quantifier]
Given a particle with bond set $B$, a \textit{local order quantifier} $\omega$ is a scalar-valued function that has the following properties: 
\par\medskip
\renewcommand{\arraystretch}{1.2}
\begin{tabular}{l@{\hspace{5pt}}l@{\hspace{25pt}}l}
    (1) & $B \mapsto \omega(B)$ & (local microscopicity) \label{cond:loq-1} \\
    (2) & $0 \leq \omega(B) < \infty$ & (finite nonnegativity) \label{cond:loq-2} \\
    (3) & $\partial \omega / \partial k > 0$ & (monotonicity in $k$) \label{cond:loq-3} \\
    (4) & $\partial \omega / \partial H(\theta) < 0$ & (monotonicity in $H(\theta)$) \label{cond:loq-4}
\end{tabular}
\label{defn:local-order-quantifier}
\end{definition}

Definition~\ref{defn:local-order-quantifier} captures in a few generic properties the geometry of spatial order in a small cluster of particles. In so doing it supplies a set of criteria for the appraisal of the local degree of order outside the setting of simple systems in thermodynamic equilibrium. 

\section{Extracopularity}
\label{sec:extracopularity}

In view of the microscopic nature of spatial order, the study of particle positions presents itself as a logical point of departure in the search for a practical local order quantifier. In the ensuing discussion, we derive a local measure of the redundancy in neighboring particle positions and find it to be a local order quantifier in the sense of Definition~\ref{defn:local-order-quantifier}. We then use this quantifier to establish a relationship between order and symmetry and later obtain an approximate expression for its marginal distribution in closed form.

\begin{remark}
Here and below the term ``entropy'' will refer exclusively to the information-theoretic quantity introduced by Shannon \cite{Shannon1948}, which provides the information content of a data source in units of bits \cite{Cover2006, Gray2011}. The information-theoretic and thermostatistical notions of entropy are closely related: in the lattice formulation, Gibbs entropy is equal to the (discrete) Shannon entropy of the microstate probabilities up to a factor of $k_B \ln2$, and in the continuum formulation, to the differential (Shannon) entropy of the phase space density up to a factor of $k_B$ \cite{Sethna2021}.
\end{remark}

\subsection{Parameter}
\label{subsec:parameter}
 
For the vast majority of configurations, a complete multiset of exact pairwise distances suffices to determine the positions of $N \geq D+2$ particles in $D$ dimensions, up to congruence and relabeling \cite[Theorem 1.6]{Boutin2004} (for a discussion on why the pair correlation function is nevertheless an incomplete invariant of structure, see Refs.~\onlinecite{Jiao2010, Stillinger2019}). Such distances serve as the basis for the derivation that follows.

\subsubsection{Derivation}
\label{subsubsec:derivation}

We begin by observing that there are really only two kinds of distances in the cluster comprising a particle and its neighbors: those between the central particle and its neighbors and those among its neighbors. 

Notice that each distance of the former kind corresponds to the length of a  bond and each of the latter kind to the length of the difference between two distinct bonds. In a set of $k$ bonds one has $n=\tbinom{k}{2}$ unordered pairs, with each bond appearing in $k-1$ of those pairs. Thus bond pairs in fact account for both kinds of distances, so that individual bonds need not be further considered. 

\begin{figure}[b]
    \centering
    \includegraphics[width=\columnwidth]{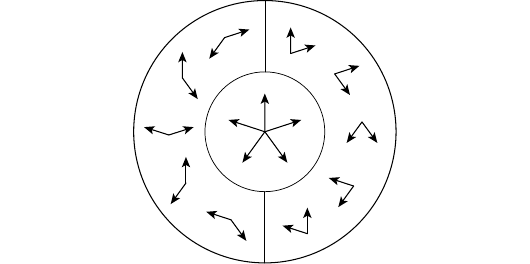}
    \caption{A regular pentagonal bond set (center) and the partition of its bond pairs by angle: $72^\circ$ (right) and $144^\circ$ (left).}
    \label{fig:pentagonal-bond-set}
\end{figure}

If all $k$ neighbors lie in the first coordination shell, as would be the case for a conventional neighbor relation, then any differences in bond lengths will be small relative to the distances between neighbors. Indeed such differences vanish altogether in the zero-temperature limit of many common crystal structures. In this way, pairs of bonds are characterized principally by the angles they make, illustrated for a two-dimensional bond set in Fig.~\ref{fig:pentagonal-bond-set}.

Let $n_i$ denote the number of bond pairs that make the angle $\vartheta_i$. By appeal to the original observation of Hartley \cite{Hartley1928} the information content of such a bond pair is given logarithmically by $\log_2 n_i$. We write this as follows:
\begin{equation*}
    H( \{b, b'\} \mid \angle\{b, b'\} = \vartheta_i ) = \log_2 n_i.
\end{equation*}

Following Shannon's subsequent theory \cite{Shannon1948, Khinchin1957} the information content of a generic bond pair $\{b,b'\}$ after accounting for its angle $\angle\{b,b'\}$ can be written as the weighted average of the above logarithm over all $m$ bond angle classes. The result is what one would formally describe as the conditional entropy of a bond pair given its angle,
\begin{align}
    H( \{b, b'\} \mid \angle\{b,b'\} ) &= \sum_{i=1}^m \frac{n_i}{n} H( \{b, b'\} \mid \angle\{b, b'\} = \vartheta_i ) \nonumber \\
    &= \frac{1}{n}\sum_{i=1}^m n_i \log_2 {n_i}
    \label{eq:parameter-definition}.
\end{align}
We term this entropy the \textit{extracopularity parameter} $\mathcal{E}$.

Given a bond pair, the corresponding bond angle is uniquely determined by a simple geometric calculation. But given a bond angle, the corresponding bond pair is determined only up to the finitely many pairs that make that same angle. The greater the angular redundancy among bond pairs, the more the residual information in a generic bond pair once its angle is observed. Since a bond pair is nothing more than a pair of bonds, each giving the relative position of a neighbor of the central particle, the extracopularity parameter is precisely a measure of the angular redundancy in neighboring particle positions.

\subsubsection{First results}
\label{subsubsec:first-results}

Eq.~\eqref{eq:parameter-definition} provides the information-theoretic definition of the extracopularity parameter $\mathcal{E}$. Given a tuple of bond angle multiplicities $(n_1, \dots, n_m)$, define the microscopic bond angle entropy $H(\theta)$ of a particle by
\begin{equation}
    H(\theta) = -\sum_{i = 1}^m \frac{n_i}{n} \log_2 \Bigl( \frac{n_i}{n} \Bigr), \quad n = \sum_{i=1}^m n_i.
    \label{eq:naive-bond-angle-entropy}
\end{equation}
Below we show that $\mathcal{E}$ can be expressed also in terms of the coordination number $k$ and the bond angle entropy $H(\theta)$.

\begin{lemma}
$ \mathcal{E} = \log_2\mathopen{}\tbinom{k}{2} - H(\theta).$
\label{lem:bayes}
\end{lemma}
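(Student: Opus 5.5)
The identity is a direct algebraic consequence of the two definitions, Eq.~\eqref{eq:parameter-definition} for $\mathcal{E}$ and Eq.~\eqref{eq:naive-bond-angle-entropy} for $H(\theta)$. Information-theoretically it is the chain rule $H(\{b,b'\}) = H(\angle\{b,b'\}) + H(\{b,b'\}\mid\angle\{b,b'\})$, since the angle is a function of the pair. Here the pair is uniform over the $n=\tbinom{k}{2}$ unordered bond pairs, so $H(\{b,b'\})=\log_2 n$. I will, however, carry out the elementary computation directly, so the argument does not depend on justifying the probabilistic model.

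The first step is to fix $n = \sum_{i=1}^m n_i$. Every one of the $\tbinom{k}{2}$ bond pairs makes exactly one angle, so the classes $\vartheta_1,\dots,\vartheta_m$ partition the set of bond pairs. This gives $n=\tbinom{k}{2}$, which is consistent with the usage in the derivation of Eq.~\eqref{eq:parameter-definition}.

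The second step is to expand $H(\theta)$ using $\log_2(n_i/n) = \log_2 n_i - \log_2 n$:
\begin{align*}
H(\theta) &= -\sum_{i=1}^m \frac{n_i}{n}\log_2 n_i + \log_2 n \sum_{i=1}^m \frac{n_i}{n} \\
&= -\mathcal{E} + \log_2 n.
\end{align*}
The second line uses $\sum_i n_i/n = 1$ together with the definition of $\mathcal{E}$.

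Rearranging and substituting $n=\tbinom{k}{2}$ gives the claim. There is no real obstacle here. The only point needing care is the identification $n=\tbinom{k}{2}$, which requires each bond pair to be counted in exactly one angle class. I would state that explicitly, and also note that $k\ge 2$ is needed for the logarithm to be defined.
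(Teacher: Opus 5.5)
Your proof is correct, but it takes a more elementary route than the paper's. The paper treats $\mathcal{E}$ as the conditional entropy $H(\{b,b'\}\mid\angle\{b,b'\})$ of a uniformly chosen bond pair and rewrites it with the identity $H(X\mid Y)=H(Y\mid X)+H(X)-H(Y)$. It then evaluates the three terms separately:
\begin{itemize}
\item $H(\angle\{b,b'\}\mid\{b,b'\})=0$, because the angle is a function of the pair;
\item $H(\{b,b'\})=\log_2\tbinom{k}{2}$, by uniformity over the pairs;
\item $H(\angle\{b,b'\})=H(\theta)$.
\end{itemize}
You instead substitute $\log_2(n_i/n)=\log_2 n_i-\log_2 n$ directly into Eq.~\eqref{eq:naive-bond-angle-entropy} and use $\sum_i n_i/n=1$. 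This needs only the two defining formulas and the observation that the angle classes partition the bond pairs. That observation forces the $n$ of Eq.~\eqref{eq:naive-bond-angle-entropy} to equal the $n=\tbinom{k}{2}$ of Eq.~\eqref{eq:parameter-definition}. Your version is self-contained and states that identification explicitly, whereas the paper leaves it implicit. The paper's version requires the extra modelling step of a uniform distribution on bond pairs. In return, it shows the lemma to be an instance of the chain rule, which supports reading $\mathcal{E}$ as the information left in a bond pair once its angle is known.
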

\begin{proof}
We have from Eq.~\eqref{eq:parameter-definition} that
\begin{equation*}
    \mathcal{E} = H( \{b, b'\} \mid \angle\{b, b'\} ).
\end{equation*}
Expanding the conditional entropy gives
\begin{multline*}
    H( \{b, b'\} \mid \angle\{b, b'\} ) = H( \angle\{b, b'\} \mid \{b, b'\} )
     \\ + H(\{b, b'\}) - H(\angle\{b, b'\}).
\end{multline*}
We evaluate the terms on the right-hand side in their order of appearance. Since bond angles are uniquely determined by bond pairs,
\begin{equation*}
     H( \angle\{b, b'\} \mid \{b, b'\} ) = 0.
\end{equation*}
A generic bond pair $\{b, b'\}$ takes one of $n = \binom{k}{2}$ values. Its Shannon entropy is therefore given by
\begin{align*}
    H ( \{b, b'\} ) &= -\sum_{i=1}^m \frac{1}{n} \log_2 \Bigl(\frac{1}{n}\Bigr) \\
    &= \log_2 n \\
    &= \log_2\mathopen{}\tbinom{k}{2}.
\end{align*}
And the entropy of the angle between a generic pair of bonds is exactly the bond angle entropy:
\begin{align*}
    H(\angle\{b, b'\}) &= -\sum_{i = 1}^m \frac{n_i}{n} \log_2 \Bigl( \frac{n_i}{n} \Bigr) \\ &= H(\theta).
\end{align*}
From the above we obtain
\begin{equation*}
    \mathcal{E} = \log_2\mathopen{}\tbinom{k}{2} - H(\theta).
\end{equation*}
\end{proof}

Using Lemma~\ref{lem:bayes} it is easily shown that the extracopularity parameter satisfies the definition of a local order quantifier developed in Sec.~\ref{subsubsec:local-order-quantifier}. The proof of the following theorem uses the finiteness of the coordination number, itself a consequence of the finiteness of the number density.

\begin{theorem}
$\mathcal{E}$ is a local order quantifier.
\begin{proof}
By Lemma~\ref{lem:bayes},
\begin{equation*}
     \mathcal{E} = \log_2\mathopen{}\tbinom{k}{2} - H(\theta).
\end{equation*}
Property 1 of Definition~\ref{defn:local-order-quantifier} is then immediate given that both $k$ and $H(\theta)$ are functions of the bond set $B$. We verify that $\mathcal{E}$ possesses the remaining three properties:

\bigskip \noindent
{Property 2.}
\begin{equation*}
    H(\theta) \leq \log_2\mathopen{}\tbinom{k}{2} < \infty, \quad k < \infty.
\end{equation*}
{Property 3.}
\begin{equation*}
    \frac{\partial \mathcal{E}}{\partial k} = \frac{2k-1}{k(k-1)\ln 2} > 0, \quad k \geq 2.
\end{equation*}
{Property 4.}
\begin{equation*}
    \frac{\partial \mathcal{E}}{\partial H(\theta)} <0, \quad k \geq 2.
\end{equation*}
\end{proof}
\label{thm:local-order-quantifier}
\end{theorem}

Observe that the definition of $\mathcal{E}$ by Eq.~\eqref{eq:parameter-definition} leaves it undefined for $k \in \{0, 1\}$. Based on Lemma~\ref{lem:bayes}, property 2 of Definition~\ref{defn:local-order-quantifier}, and its information-theoretic interpretation, we extend the parameter as follows:
\begin{equation}
    \mathcal{E} = 
    \begin{cases}
        0 & k \in \{0, 1\} \\
        \log_2\mathopen{}\tbinom{k}{2}-H(\theta) & k \in \{2, 3, \dots\}.
    \end{cases}
\end{equation}

It is worthy of notice that $\mathcal{E}$ is in a certain sense the simplest function that satisfies Definition~\ref{defn:local-order-quantifier}. In particular, properties 3 and 4 are most easily attained by a function of additively separable form,
\begin{equation*}
    f(k, H(\theta)) = f_1(k) + f_2(H(\theta)),
\end{equation*}
with the unique homogeneous and isometric choice of $f_2$ that has property 4 being the negation function,
\begin{equation*}
    f_2(H(\theta)) = -H(\theta),
\end{equation*}
and the minimal choice of $f_1$ having property 3 being the least upper bound of the additive inverse of $f_2$,
\begin{equation*}
    f_1(k) = \log_2\mathopen{}\tbinom{k}{2},
\end{equation*}
which automatically ensures properties 1 and 2.

As a corollary to Theorem~\ref{thm:local-order-quantifier} we have that the extracopularity parameter $\mathcal{E}$ is bounded from below by an earlier descriptor of interest called the extracopularity \mbox{coefficient} $E$ \cite{Camkiran2022, Camkiran2023, Stimac2023, Hibbard2024}, given for a particle with coordination number $k \geq 2$ and $m \leq \binom{k}{2}$ bond angle classes by
\begin{equation}
    E = \log_2\biggl(\frac{k^2-k}{2m}\biggr).
    \label{eq:coefficient}
\end{equation}

\begin{corollary}
$\mathcal{E} \geq E.$
\begin{proof}
Since $H(\theta) \leq \log_2 m$ \cite[Theorem~2.6.4]{Cover2006},
    \begin{align*}
    \mathcal{E} &= \log_2\mathopen{}\tbinom{k}{2} - H(\theta) \nonumber \\[1ex]
    &\geq \log_2\mathopen{}\tbinom{k}{2} - \log_2 m \nonumber \\[1ex]
    &= \log_2 \biggl( \frac{k^2-k}{2m} \biggr) \nonumber \\
    &= E.
    \end{align*} 
\end{proof}
\label{corr:coefficient}
\end{corollary}

Table~\ref{tab:geometries-1} compares the parameter $\mathcal{E}$ with the coefficient $E$ for a few geometries of physical prevalence. Notice that the bound provided by Corollary~\ref{corr:coefficient} is attained in the regular tetrahedral case, indicating that it is sharp, and that the average percentage departure of $\mathcal{E}$ from $E$ is under ten, suggesting that it is tight also.

We examine the regular icosahedral case, which is notable for its high symmetry \cite{King1970, Coxeter1973}, low energy \cite{Frank1952, Cohn2007, Malins2013}, dense packing \cite{Hales2010, Schaller2016}, and small intertia \cite{Sloane1995, Manoharan2003}.

\renewcommand{\arraystretch}{1.1}
\begin{table}[b]
    \centering
    \begin{ruledtabular}
        \begin{tabular}{lrrrrr}
            \noalign{\vskip 2pt}
            Geometry
            & \multicolumn{1}{c}{$k$}
            & \multicolumn{1}{c}{$H(\theta)$}
            & \multicolumn{1}{c}{$\mathcal{E}$}
            & \multicolumn{1}{c}{$E$}
            & \multicolumn{1}{c}{$\Delta \%$}
            \\
            \noalign{\vskip 3pt}
            \hline
            \noalign{\vskip 4pt}
            Trigonal bipyramidal      &  $5$ & $1.296$ & $2.026$ & $1.737$ & $14.3$ \\
            Regular tetrahedral       &  $4$ & $0.000$ & $2.585$ & $2.585$ &  $0.0$ \\
            Pentagonal bipyramidal    &  $7$ & $1.705$ & $2.688$ & $2.392$ & $10.1$ \\
            Hexagonal bipyramidal     &  $8$ & $1.877$ & $2.930$ & $2.807$ &  $4.2$ \\
            Regular octahedral        &  $6$ & $0.722$ & $3.184$ & $2.907$ &  $8.7$ \\
            Square antiprismatic      &  $8$ & $1.379$ & $3.429$ & $3.222$ &  $6.0$ \\
            Triangular orthobicupolar & $12$ & $2.209$ & $3.835$ & $3.459$ &  $9.8$ \\
            Rhombic dodecahedral      & $14$ & $2.455$ & $4.053$ & $3.923$ &  $3.3$ \\
            Cuboctahedral             & $12$ & $1.823$ & $4.221$ & $4.044$ &  $4.2$ \\
            Regular icosahedral       & $12$ & $1.349$ & $4.696$ & $4.459$ &  $5.0$ \\
            \noalign{\vskip 1pt}
        \end{tabular}
    \end{ruledtabular}
    \caption{Coordination number $k$, bond angle entropy $H(\theta)$, extracopularity parameter $\mathcal{E}$, extracopularity coefficient $E$, and the percentage departure $\Delta \%$ of the parameter from the coefficient for ten commonly encountered geometries.}
    \label{tab:geometries-1}
\end{table}

\begin{example} 
Up to a scaling and an orthogonal transformation, a regular icosahedral bond set $B$ has the form
\begin{align*}
    B = \{(0, \pm 1, \pm \phi), (\pm 1, \pm \phi, 0), (\pm \phi, 0, \pm 1)\},
\end{align*}
where $\phi = (1+ \sqrt{5})/2$ is the golden ratio.
It will be seen that $B$ contains $k = 12$ bonds and therefore $n = 66$ bond pairs. Applying the formula
\begin{equation*}
    \angle \{b, b'\} = \arccos\biggl(\frac{b \cdot b}{\norm{b} \norm{b'}}\biggr)
\end{equation*}
yields the following multiplicities:
\begin{equation*}
    n_i =
    \begin{cases}
        30 & \theta_i = 63.4^\circ \\
        30 & \theta_i = 116.6^\circ \\
        \phantom{1}6 & \theta_i = 180^\circ.
    \end{cases}
\end{equation*}
Evaluating Eq.~\eqref{eq:naive-bond-angle-entropy} gives
\begin{align*}
    H(\theta) &= -\sum_{i = 1}^m \frac{n_i}{n} \log_2 \biggl(\frac{n_i}{n}\biggr) \\ 
    &=  -2 \biggl(\frac{30}{66}\biggr)\log_2\biggl(\frac{30}{66}\biggr) -\frac{6}{66}\log_2\biggl(\frac{6}{66}\biggr) \\
    &= \log_2 11 - \frac{10}{11}\log_2 5 \\[1ex]
    &\approx  1.349~\text{bits}.
\end{align*}
Using Lemma~\ref{lem:bayes} we obtain
\begin{align*}
    \mathcal{E} &= \log_2\mathopen{}\tbinom{k}{2} - H(\theta) \\
    &= \log_2 66 - \log_2 11 + \frac{10}{11}\log_2 5 \\
    &= \log_2 6 + \frac{10}{11}\log_2 5 \\[1ex]
    &\approx 4.696~\text{bits}.
\end{align*}
\label{example:ico}
\end{example}

\subsubsection{Bond angle entropy}
\label{subsubsec:bond-angle-entropy}

All of the geometries so far considered are characterized by well-separated bond angles. For such geometries, the microscopic bond angle entropy $H(\theta)$ can be computed directly by inserting the associated multiplicities $n_i$ into Eq.~\eqref{eq:naive-bond-angle-entropy} as in Example~\ref{example:ico}. But in the generic case where one has an empirical measure of the form
\begin{equation}
    p_\theta(\vartheta) = \frac{1}{n}\sum_{i = 1}^n \delta(\vartheta - \vartheta_i),
    \label{eq:empirical-measure}
\end{equation}
where $\delta$ is the Dirac delta function, bond angles may not be well separated. In such situations, the multiplicities are degenerate and the usual formula cannot be used.

The naive definition of $H(\theta)$ provided by Eq.~\eqref{eq:naive-bond-angle-entropy} can be extended to the generic case by interpolating the relationship between the (microscopic) bond angle entropy, where known, and the entropy of an embedded image of the empirical measure, which we now describe.

Before proceeding we observe that a valid assignment of bond angle entropy must come from the set of all values that can be produced by Eq.~\eqref{eq:naive-bond-angle-entropy} for the given $k \geq 2$,
\begin{multline}
    \mathbb{H}_k = \biggl\{ -\sum_{i=1}^m \frac{n_i}{n} \log_2 \Bigl( \frac{n_i}{n} \Bigr) : n = \tbinom{k}{2}, \\
    m \in \{1, \dots, n\}, \, n_i \in \mathbb{N}, \, \sum_{i=1}^m n_i = n \biggr\}.
    \label{eq:H-set}
\end{multline}
Let us write $\lfloor x \rceil_{\mathbb{H}_k}$ for the element of $\mathbb{H}_k$ nearest to $x \in \mathbb{R}$.

We define an embedding that convolves the empirical bond angle measure $p_\theta$ with a Gaussian kernel $g_\sigma$. This produces a density on $(-\infty, +\infty)$ given by
\begin{align*}
    (g_\sigma * p_\theta)(\vartheta)
    &= \frac{1}{n}\sum_{i = 1}^n g_\sigma(\vartheta - \vartheta_i) \nonumber \\
    &= \frac{1}{n} \sum_{i=1}^n \frac{1}{\sqrt{2\pi}\sigma}
    \exp\biggl(-\frac{(\vartheta - \vartheta_i)^2}{2\sigma^2}\biggr).
\end{align*}
The embedded image entropy $\mathfrak{H}(\theta)$ is then obtained by evaluating its differential Shannon entropy:
\begin{align}
    \mathfrak{H}(\theta) &=- \int_{-\infty}^{+\infty} (g_\sigma * p_\theta)(\vartheta) \ln [(g_\sigma * p_\theta)(\vartheta)] d\vartheta.
\end{align}

Consider the set $\mathbb{B}$ of all bond sets with well-separated bond angles. It can be shown using information theory that $H(\theta)$ is approximately affine in $\mathfrak{H}(\theta)$ for such bond sets: Given a discrete random variable $X$ with mass function $p_X$ define $H(p_X) = H(X)$. The Shannon entropy of the $\Delta$-quantized embedded image, $H((g_\sigma * p_\theta)_\Delta)$, satisfies
\begin{align*}
    H((g_\sigma * p_\theta)_\Delta) &= \frac{\mathfrak{H}(\theta)}{\ln 2} + \log_2 \frac{1}{\Delta} + r_\Delta(\Delta, \sigma) \\
    &= H((g_\sigma)_\Delta) + H(\theta) + r_\sigma(\Delta, \sigma),
\end{align*}
with $r_\Delta(\Delta, \sigma) \to 0$ as $\Delta \to 0$ by the entropy of a quantized continuous random variable \cite[Theorem~8.3.1]{Cover2006} and $r_\sigma(\Delta, \sigma) \to 0$ as $\sigma \to 0$ under well-separated bond angles by the entropy of a disjoint mixture \cite[Lemma~2.5.2]{Gray2011}. Let $\epsilon(\Delta, \sigma) = r_\Delta(\Delta, \sigma) - r_\sigma(\Delta, \sigma)$. Then
\begin{align*}
    H(\theta) &= \frac{1}{\ln 2} \mathfrak{H}(\theta) + \log_2 \frac{1}{\Delta} - H((g_\sigma)_\Delta) + \epsilon(\Delta, \sigma) \\
    &\approx c_1 \mathfrak{H}(\theta) + c_0(\Delta, \sigma), \enspace c_1 = 1/\ln 2, \enspace c_0(\Delta, \sigma) \in \mathbb{R}.
\end{align*}
Verifying this result via regression, taking $\mathbb{B}$ to be the set of bond sets in Table~\ref{tab:geometries-1}, gives an $R^2$ within $10^{-4}$ of one.

Notwithstanding this approximate affinity, a true extension of Eq.~\eqref{eq:naive-bond-angle-entropy} must recover the exact value of $H(\theta)$ for every element of $\mathbb{B}$. Given a bond set $B_i \in \mathbb{B}$ denote by $H(\theta_i)$ its bond angle entropy and by $\mathfrak{H}(\theta_i)$ the entropy of the associated embedded image. We satisfy the above requirement with the following interpolation:
\begin{equation}   
    \begin{gathered}
        H(\theta) = \lfloor(1-t) H(\theta_a) + t H(\theta_b)\rceil_{\mathbb{H}_k}, \\[1ex]
        t = \frac{\mathfrak{H}(\theta) - \mathfrak{H}(\theta_a)}{\mathfrak{H}(\theta_b) - \mathfrak{H}(\theta_a)}; 
    \end{gathered}
    \label{eq:extended-bond-angle-entropy}
\end{equation}
where if $\mathfrak{H}(\theta_i) \leq \mathfrak{H}(\theta) < \mathfrak{H}(\theta_j)$ for some ${B_i, B_j \in \mathbb{B}}$, then $B_a$ and $B_b$ are two bond sets in $\mathbb{B}$ with, respectively, the largest and smallest embedded image entropies satisfying
\begin{equation*}
    \mathfrak{H}(\theta_a) \leq \mathfrak{H}(\theta) < \mathfrak{H}(\theta_b);
\end{equation*}
otherwise $B_a$ and $B_b$ are two bond sets in $\mathbb{B}$ that minimize and maximize the embedded image entropy, respectively.

The compromise between a degenerate embedding at ${\sigma=0^\circ}$ and the loss of bond angle class distinctions for $\sigma$ sufficiently large suggests that the optimal embedding scale is half the value of $\sigma$ for which the two closest peaks of $g_\sigma * p_\theta$ become indistinguishable for any bond set in $\mathbb{B}$. For the geometries in Table~\ref{tab:geometries-1}, this value is smallest in the triangular orthobicupolar case at $\sigma \approx 2.5^\circ$.

\renewcommand{\arraystretch}{1.1}
\begin{table}[b]
    \centering
    \begin{ruledtabular}
        \begin{tabular}{lrrr}
            \noalign{\vskip 2pt}
            Geometry
            & \multicolumn{1}{c}{$k$}
            & \multicolumn{1}{c}{$H(\theta)$}
            & \multicolumn{1}{c}{$\mathcal{E}$}
            \\
            \noalign{\vskip 3pt}
            \hline
            \noalign{\vskip 4pt}
            Trigonal antiprismatic           &  $6$ & $1.519$ & $2.388$ \\
            Capped trigonal prismatic        &  $7$ & $1.590$ & $2.802$ \\
            Snub disphenoidal                &  $8$ & $1.979$ & $2.829$ \\
            Bicapped trigonal prismatic      &  $8$ & $1.838$ & $2.969$ \\
            Tricapped trigonal prismatic     &  $9$ & $1.949$ & $3.221$ \\
            Capped square antiprismatic      &  $9$ & $1.865$ & $3.305$ \\
            Bicapped square antiprismatic    & $10$ & $2.085$ & $3.407$ \\
            Bicapped pentagonal prismatic    & $12$ & $2.260$ & $3.785$ \\
            Octadecahedral                   & $11$ & $1.942$ & $3.840$ \\
            Capped pentagonal antiprismatic  & $11$ & $1.349$ & $4.433$ \\
            \noalign{\vskip 1pt}
        \end{tabular}
    \end{ruledtabular}
    \caption{Coordination number $k$, bond angle entropy $H(\theta)$, and extracopularity parameter $\mathcal{E}$ for ten other geometries.}
    \label{tab:geometries-2}
\end{table}

Eq.~\eqref{eq:extended-bond-angle-entropy} has both of the two properties that would be expected of an extension of the microscopic bond angle entropy, namely that it agrees with the naive definition provided by Eq.~\eqref{eq:naive-bond-angle-entropy} for all $B \in \mathbb{B}$ and that it is continuous with respect to the empirical measure $p_\theta$ [Eq.~\eqref{eq:empirical-measure}]. Values of $\mathcal{E}$ for geometries characterized by ill-separated (or otherwise ambiguous) bond angles, with $H(\theta)$ computed using the above extension, are given in Table~\ref{tab:geometries-2}.

\subsection{Order and symmetry}
\label{subsec:order-symmetry}

The study of the structure of many-particle systems is often concerned with symmetry, which in addition to its significance to crystals, is also important in liquids and glasses \cite{Finney1989, Leocmach2012, Tanaka2013, Malins2013, Hu2015, Zhang2020}. Yet the link between order and symmetry has hitherto received little attention \cite{Pinsky1998, Gavezzotti2007, Tanaka2019, Hibbard2024}, with much of the related effort directed toward quantifying how close a local cluster is to having a certain kind of symmetry \cite{Nelson1979, Steinhardt1983, Pinsky1998, Kelchner1998, Tsuzuki2007, Cumby2017, Han2020, Liu2024}. Below we establish a precise relationship between these formally distinct aspects with the help of extracopularity.

\subsubsection{Point group bound}
\label{subsubsec:point-group-bound}

Recall from Sec.~\ref{subsubsec:local-order-quantifier} that a bond set $B$ is a collection of $k$ vectors or bonds in $D$ dimensions, where $k$ is called the coordination number. It is helpful to introduce an idealization in which all bonds in $B$ are equal in length.

\begin{definition}[spherical]
A bond set $B$ is said to be \textit{spherical} if it has $k > 2$ elements of the same length.
\end{definition}

Other things being equal, a larger point group suggests a further departure from complete randomness and thus a greater degree of order. The following theorem asserts, in accord with this intuition, that the extracopularity parameter for a spherical bond set is bounded from below by an increasing function of the size of its point group.

\begin{theorem}
Let $B$ be a spherical bond set with point group $G$. Denote by $\sigma_O$ the size of the stabilizer of a bond pair with orbit $O$ and define $\sigma^* = \max_{O}\sigma_O$. Then
\begin{equation*}
    \mathcal{E} \geq E \geq \log_2 \biggl( \frac{\abs{G}}{\sigma^*} \biggr).
\end{equation*}
\label{thm:symmetry}
\begin{proof}
Consider the set of all pairs of bonds in $B$,
\begin{equation*}
    \tbinom{B}{2} = \{\{b,b'\} \subset B: b \neq b' \}.
\end{equation*}
The action of $G$ on this set is defined by
\begin{equation*}
    g\{b,b'\} = \{gb,gb'\}
\end{equation*}
and induces a homomorphism
\begin{equation*}
    \Phi : G \to \operatorname{Sym}\bigl( \tbinom{B}{2}\bigr).
\end{equation*}
With $k>2$, every nontrivial element of $G$ moves at least one bond pair. The action is therefore faithful, so that
\begin{equation*}
\abs{\Phi(G)} = \abs{G}.
\end{equation*}
By the orbit–stabilizer theorem \cite{Gallian2021} the size of the orbit $O$ of a bond pair with stabilizer $\sigma_O$ is then given by
\begin{align}
    \abs{O} 
    &= \frac{\abs{\Phi(G)}}{\abs{\sigma_O}} \nonumber \\
    &= \frac{\abs{G}}{\abs{\sigma_O}}.
\end{align}

Since the set $\mathcal{O}$ of all $\Phi(G)$-orbits partitions $\tbinom{B}{2}$, the sum of the sizes of all such orbits is equal to the total bond pair count
\begin{align}
    n 
    &= \bigl\lvert{\tbinom{B}{2}}\bigr\rvert \nonumber \\[2ex]
    &= \sum_{O \in \mathcal{O}} \abs{O} \nonumber \\ 
    &= \abs{G} \sum_{O \in \mathcal{O}} \frac{1}{\abs{\sigma_O}}. 
    \label{eq:sum-orbit-sizes}
\end{align}

The action of $G$ preserves angles, and so the number $m$ of angle classes cannot exceed the number of orbits:
\begin{equation}
    m \leq \abs{\mathcal{O}}.
    \label{eq:m-bound}
\end{equation}

Dividing both sides of Eq.~\eqref{eq:sum-orbit-sizes} by $m$ and then using Ineq.~\eqref{eq:m-bound} gives
\begin{align}
    \frac{n}{m} 
    &= \frac{\abs{G}}{m} \sum_{O \in \mathcal{O}} \frac{1}{\sigma_O} \nonumber \\ 
    &\geq \abs{G} \frac{1}{\abs{\mathcal{O}}} \sum_{O \in \mathcal{O}} \frac{1}{\sigma_O} \nonumber \\
    &\geq \frac{\abs{G}}{\sigma^*},
\end{align}
where the final inequality follows from the fact that $1/\sigma_O \geq 1/\sigma^*$ for all $O \in \mathcal{O}$. Now recalling Corollary~\ref{corr:coefficient}  we arrive at
\begin{align*}
    \mathcal{E} 
    &\geq E \\
    &= \log_2\Bigl(\frac{n}{m}\Bigr) \\
    &\geq \log_2\biggl(\frac{\abs{G}}{\sigma^*}\biggr).
\end{align*}
\end{proof}
\end{theorem}
 
The second inequality in the statement of Theorem~\ref{thm:symmetry} can be shown to hold with equality for bond sets that have the geometry of a regular simplex or that of a regular convex polygon with an odd number of vertices. The following example demonstrates the simplicial case in three dimensions, where the bond set has a regular tetrahedral geometry.

\begin{example}
Let $B$ be a regular tetrahedral bond set. We have $k = 4$ bonds, equal in length; $m = 1$ bond angle class; and $G=T_d$ as the point group, so that $\abs{G} = 24$. The corresponding set of bond pairs takes the form
\begin{align*}
\tbinom{B}{2} = \{ &\{b_1,b_2\}, \{b_1,b_3\}, \{b_1,b_4\}, \\
                             &\{b_2,b_3\}, \{b_2,b_4\}, \{b_3,b_4\} \}.
\end{align*}

Consider the stabilizer of the pair $\{b_1,b_2\}$ under the action of $G$. As illustrated in Fig.~\ref{fig:bond-pair-stabilizer}, this set has four elements: an identity
\begin{equation*}
e : 
    \begin{cases}
        \{b_1, b_2\} \mapsto \{b_1, b_2\} \\
        \{b_1, b_3\} \mapsto \{b_1, b_3\} \\
        \{b_1, b_4\} \mapsto \{b_1, b_4\} \\
        \{b_2, b_3\} \mapsto \{b_2, b_3\} \\
        \{b_2, b_4\} \mapsto \{b_2, b_4\} \\
        \{b_3, b_4\} \mapsto \{b_3, b_4\}
    \end{cases}
\end{equation*}
a rotation
\begin{equation*}
R : 
    \begin{cases}
        \{b_1, b_2\} \mapsto \{b_2, b_1\} \\
        \{b_1, b_3\} \mapsto \{b_2, b_4\} \\
        \{b_1, b_4\} \mapsto \{b_2, b_3\} \\
        \{b_2, b_3\} \mapsto \{b_1, b_4\} \\
        \{b_2, b_4\} \mapsto \{b_1, b_3\} \\
        \{b_3, b_4\} \mapsto \{b_4, b_3\}
    \end{cases}
\end{equation*}
and two reflections
\begin{equation*}
M : 
    \begin{cases}
        \{b_1, b_2\} \mapsto \{b_1, b_2\} \\
        \{b_1, b_3\} \mapsto \{b_1, b_4\} \\
        \{b_1, b_4\} \mapsto \{b_1, b_3\} \\
        \{b_2, b_3\} \mapsto \{b_2, b_4\} \\
        \{b_2, b_4\} \mapsto \{b_2, b_3\} \\
        \{b_3, b_4\} \mapsto \{b_4, b_3\}
    \end{cases}
 M' : 
    \begin{cases}
        \{b_1, b_2\} \mapsto \{b_2, b_1\} \\
        \{b_1, b_3\} \mapsto \{b_2, b_3\} \\
        \{b_1, b_4\} \mapsto \{b_2, b_4\} \\
        \{b_2, b_3\} \mapsto \{b_1, b_3\} \\
        \{b_2, b_4\} \mapsto \{b_1, b_4\} \\
        \{b_3, b_4\} \mapsto \{b_3, b_4\}.
    \end{cases}
\end{equation*}
\begin{figure}[t]
    \centering
    \includegraphics[width=\columnwidth]{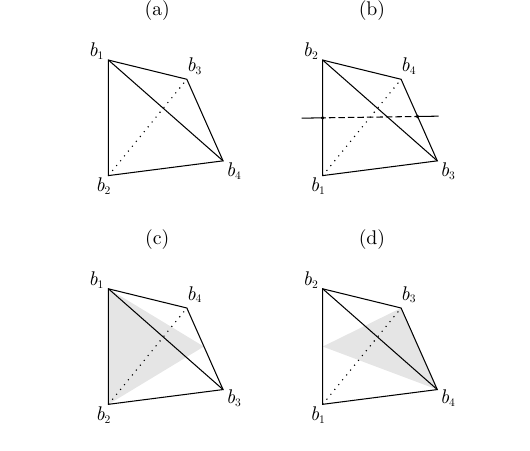}
    \caption{The stabilizer of the bond pair $\{b_1,b_2\}$ under the action of the tetrahedral point group $T_d$: (a) the identity $e$, (b) the rotation $R$, (c) the reflection $M$, and (b) the reflection $M'$. Here bonds are represented not by arrows as is usually done but by the vertices at their terminal points, so that bond pairs correspond to the line segments joining those vertices.} 
    \label{fig:bond-pair-stabilizer}
\end{figure}
By the edge-transitivity of the regular tetrahedron, all $n=6$ pairs in $\binom{B}{2}$ have the same orbit and therefore stabilizers of the same size $\sigma^* = \max\{4\} = 4$. Thus
\begin{align*}
    \mathcal{E} &= E \\ 
    &= \log_2\Bigl(\frac{n}{m}\Bigr) \\ 
    &= \log_2\Bigl(\frac{4n}{4m}\Bigr) \\
    &= \log_2\biggl(\frac{\abs{G}}{\sigma^*}\biggr).
\end{align*}
\end{example}

\subsubsection{Icosahedral maximum}
\label{subsubsec:icosahedral-maximum}

The regular icosahedron has the largest point group among all convex polyhedra with a vertex count of at most $12$, this being the kissing number in three dimensions \cite{Schuette1952, Musin2006}. We now prove an analogous result for the lower bound $E$ of the extracopularity parameter.

\begin{proposition}
Let $B$ be a spherical bond set in $\mathbb{R}^3$ with coordination number $k = \abs{B} \leq 12$. Then $E \leq E_\text{ico}$.
\begin{proof}
Let $E = E(k, m)$. A regular icosahedral bond set places $k=12$ points on a sphere with $m=3$ distinct central angles. Using Eq.~\eqref{eq:coefficient} we have $E_\text{ico} = E(12,3) = \log_2(22)$. We show that no valid combination of $k$ and $m$ gives higher $E$. Since $E$ is strictly increasing in $k$, it suffices to check the largest admissible $k$ for each $m$. And since $E$ is strictly decreasing in $m$, we need not check $m > 3$. For $m = 3$, we have $E(k,3) \leq E(12,3)$ by the hypothesis that $k \leq 12$. For $m = 2$, no more than $k=6$ points can be placed on a sphere \cite[Theorem~2]{Croft1962}; hence $E(k,2) \leq E(6, 2) = \log_2(15/2) < \log_2(22) = E(12,3)$. For $m = 1$, no more than $k = 4$ points can be placed on a sphere; hence $E(k,1) \leq E(4,1) = \log_2(6) < \log_2(22) = E(12,3)$. Thus $E \leq E(12 ,3) = E_\text{ico}$.
\end{proof}
\label{prop:icosahedral}
\end{proposition}

\subsection{Distribution function}
\label{subsec:distribution-function}

By Theorem~\ref{thm:local-order-quantifier}, the extracopularity parameter $\mathcal{E}$ is a local order quantifier in the sense of Definition~\ref{defn:local-order-quantifier} and hence a source of coarse local descriptions as defined in Sec.~\ref{subsubsec:coarse-local-descriptions}. Accordingly some function of the values of $\mathcal{E}$ for all particles in a system approximates a complete invariant for its structure. In many cases, however, the statistical behavior of a local descriptor for a single particle suffices to draw distinctions between structures. We therefore focus the subsequent discussion on the marginal distribution of $\mathcal{E}$.

Define the \textit{\mbox{extracopularity} distribution function} (XDF) $p_\mathcal{E}$ as the marginal probability mass function of the extracopularity parameter for a given particle. We may write the conditional distribution of $\mathcal{E}$ given $k$ piecewise.
\begin{multline*}
    P\{ \mathcal{E} = \varepsilon \mid k = \kappa\} \\
    = 
    \begin{cases}
        \boldsymbol{1}\{ \varepsilon = 0 \} & \kappa \in \{0, 1, 2\}  \\[.5ex]
        P \bigl\{ \log_2\mathopen{}\tbinom{k}{2} - H(\theta) = \varepsilon \bigm| k = \kappa\bigr\} & \kappa \in \{3, 4, \dots\},
    \end{cases}
\end{multline*}
where $\boldsymbol{1}\{A\}$ denotes the indicator function of the condition $A$. The XDF can then be expressed as a mixture of the following form:
\begin{align}
    p_\mathcal{E}(\varepsilon) &= P\{\mathcal{E} = \varepsilon\} \nonumber \\
    &= \sum_{\kappa = 0}^\infty p_k(\kappa) P\{ \mathcal{E} = \varepsilon \mid k = \kappa\} \nonumber \\
    &= (p_k(0) + p_k(1) + p_k(2))\boldsymbol{1}\{\varepsilon = 0\} \nonumber \\ 
    &\hphantom{=}\,\,+\sum_{\kappa = 3}^ \infty p_k(\kappa) P\bigl\{ H(\theta) = \log_2\mathopen{}\tbinom{k}{2} - \varepsilon \bigm| k =\kappa\bigr\}.
    \label{eq:xdf-master}
\end{align}
The second factor of each term in the series is determined by the conditional distribution of the microscopic bond angle entropy $H(\theta)$ given the coordination number $k$. Below this function is approximated for particles with high coordination numbers and broad bond angle distributions.

Recall from Sec.~\ref{subsubsec:bond-angle-entropy} that the (microscopic) bond angle entropy $H(\theta)$ is a functional of the empirical bond angle measure $p_\theta$. This measure is itself a function of a sample from the ensemble bond angle distribution $f_\Theta$, given for a tuple of ${\nu = \tbinom{\kappa}{2}}$ angles $(\Theta^{(1)}, \dots, \Theta^{(\nu)})$ by

\begin{multline}
    f_\Theta(\vartheta^{(1)}, \dots, \vartheta^{(\nu)}) \\[0.5ex] = \frac{1}{\nu!} \Biggl\langle \, \sum_{\pi \in \operatorname{Sym}(\{1, \dots, \nu\})} \prod_{j = 1}^\nu \delta(\Theta^{\pi(j)} - \vartheta^{(j)}) \Biggr\rangle.
    \label{eq:joint-ensemble-bond-angle-distrib}
\end{multline}
Hence the bond angle entropy as it is formally extended in Eq.~\eqref{eq:extended-bond-angle-entropy} is better suited to numerical evaluation.

In order to proceed symbolically, one may work with a tractable estimator of $H(\theta)$. We start by partitioning the bond angle interval $(0^\circ, 180^\circ]$ into $M$ equal-width bins
\begin{equation*}
    I_i =  \biggl(\frac{180^\circ}{M}(i-1), \frac{180^\circ}{M}i\biggr], \quad i = 1, \dots, M.
\end{equation*}
Let $\hat{n}_i$ denote the number of bond pairs whose angles lie in the $i$th bin, so that $\sum_{i=1}^M \hat{n}_i = \nu$. Then $(\hat{n}_1, \dots, \hat{n}_M)$ is an $M$-bin bond angle histogram and the corresponding maximum likelihood estimator of the bond angle entropy takes the form
\begin{equation}
    \hat{H}(\theta) = - \sum_{i = 1}^M \frac{\hat{n}_i}{\nu} \log_2 \biggl(\frac{\hat{n}_i}{\nu}\biggr). 
    \label{eq:max-likelihood}
\end{equation}

We suppose that each histogram is constructed from $\nu$ independent samples from the discretized one-angle marginal $p_\Theta$, defined for the $M$ bin centers $\vartheta_i$ by
\begin{align}
    p_\Theta(\vartheta_i) =  \int_{(0^\circ, 180^\circ]^{\nu-1}} \int_{I_i}  f_\Theta(\vartheta',\vartheta^{(2)}, \dots, \vartheta^{(\nu)}) d \vartheta' d^{\nu-1}\vartheta.
    \label{eq:disc-ensemble-bond-angle-distrib}
\end{align}
Under this simplifying assumption we have multinomial histogram probabilities
\begin{multline*}
    \!\!\! P\{\hat{n}_1 = \nu_1,  \dots, \hat{n}_M = \nu_M\} = \frac{\nu!}{\nu_1!  \dots \nu_M !} \prod_{i = 1}^M p_\Theta(\vartheta_i)^{\nu_i}.
\end{multline*}
The conditional probabilities of the bond angle entropy estimates are then given by
\begin{align*}
    P\{\hat{H}(\theta) = \mathcal{H} \mid k = \kappa \} &= P\{\hat{H}(\theta) = \mathcal{H} \mid n = \nu \} \nonumber 
    \\ &= \sum_{\substack{(\nu_1, \dots, \nu_M): \\ \hat{H}(\theta) = 
    \mathcal{H}}} \nu! \prod_{i=1}^M \frac{p_\Theta(\vartheta_i)^{\nu_i}}{\nu_i!}.
    \label{eq:multinomial}
\end{align*}

For large $\nu$ and comparable $p_\Theta(\theta_i)$ the foregoing may be approximated by a normal distribution
\begin{equation}
    \mathcal{N}[\mu_{\hat{H}(\theta)|k}(\kappa), \sigma^2_{\hat{H}(\theta)|k}(\kappa)]
    \label{eq:normal-approx}
\end{equation}
with mean $\mu_{\hat{H}(\theta)|k}(\kappa)$ and variance $\sigma^2_{\hat{H}(\theta)|k}(\kappa)$.

The mean is obtained by subtracting the bias term from the first-order expectation expansion \cite{Miller1955, Paninski2003}:
\begin{equation}
    \mu_{\hat{H}(\theta)|k}(\kappa)\approx - \sum_{i=1}^M p_\Theta(\vartheta_i) \log_2  p_\Theta(\vartheta_i) - \frac{m-1}{2\binom{\kappa}{2}\ln2},
    \label{eq:mean-H}
\end{equation}
where $m \leq M$ is the number of nonempty bins and $0 \log_2 0 = 0$ by convention.

The variance is in turn obtained using the delta method \cite{Vaart1998}, which linearizes the entropy functional via a first-order Taylor expansion:
\begin{multline}
    \sigma_{\hat{H}(\theta)|k}^2(\kappa) \approx \frac{1}{\binom{\kappa}{2}} \biggl\{ \sum_{i=1}^M p_\Theta(\vartheta_i) \biggl( \log_2 p_\Theta(\vartheta_i) + \frac{1}{\ln2} \biggr)^2 \\ 
    - \biggl[\,\sum_{i=1}^M p_\Theta(\vartheta_i) \biggl( \log_2 p_\Theta(\vartheta_i) + \frac{1}{\ln2} \biggr) \biggr]^2 \biggr\}.
    \label{eq:variance-H}
\end{multline}

Together Eqs.~\eqref{eq:xdf-master} and \eqref{eq:max-likelihood}--\eqref{eq:normal-approx} produce a formula for the XDF of a highly coordinated particle with broadly distributed bond angles:
\begin{multline}
    Z_\mathcal{E}p_\mathcal{E}(\varepsilon) \approx \boldsymbol{1}\{\varepsilon = 0\}\sum_{\kappa = 0}^2 p_k(\kappa)  + \sum_{\kappa=3}^K {p_k(\kappa)} \\ 
    \frac{1}{\sigma_{\hat{H}(\theta)|k}(\kappa)}\exp\Biggl\{{-\frac{1}{2} \Biggl(\frac{\log_2\mathopen{}\tbinom{\kappa}{2} - \varepsilon-\mu_{\hat{H}(\theta)|k}(\kappa)}{\sigma_{\hat{H}(\theta)|k}(\kappa)}\Biggr)^2}\Biggr\}, \\[2ex] 
    \varepsilon \in \bigl\{ \log_2\mathopen{}\tbinom{a}{2} - b : a \in \mathbb{N}, 2 \leq a \leq K,  b \in \mathbb{H}_a \bigr\};
    \label{eq:xdf-highly-coordinated}
\end{multline}
with the series in Eq.~\eqref{eq:xdf-master} here truncated at some $K \gg 3$.

Eq.~\eqref{eq:xdf-highly-coordinated} is in essence a Gaussian mixture with mixture weights $p_k(\kappa)$, component means
\begin{equation*}
    \mu_{\mathcal{E}|k}(\kappa) = 
    \begin{cases}
        0 & \kappa \in \{0, 1\} \\
        \log_2\mathopen{}\tbinom{\kappa}{2} - \mu_{\hat{H}(\theta)|k}(\kappa) & \kappa \in \{2, 3, \dots\},
    \end{cases}
\end{equation*}
and component variances
\begin{equation*}
    \sigma^2_{\mathcal{E}|k}(\kappa) = \sigma^2_{\hat{H}(\theta)|k}(\kappa), \quad \kappa \in \{2, 3, \dots\}.
\end{equation*}
Using the laws of total expectation and variance its first two moments reduce respectively to
\begin{gather*}
    \mu_\mathcal{E} = \sum_{\kappa=3}^K p_k(\kappa) \bigl[ \log_2\mathopen{}\tbinom{\kappa}{2} -  \mu_{\hat{H}(\theta)|k}(\kappa)\bigr], \\
    \begin{split}
    \sigma_{\mathcal{E}}^{2} = \sum_{\kappa=3}^{K} &p_{k}(\kappa)\sigma_{\hat{H}(\theta)|k}^{2}(\kappa) + \sum_{\kappa=0}^{2}p_k(\kappa)(0-\mu_{\mathcal{E}})^2 \\ &+\sum_{\kappa=3}^{K}p_{k}(\kappa)\bigl[\log_{2}\tbinom{\kappa}{2} -\mu_{\hat{H}(\theta)|k}(\kappa) -\mu_{\mathcal{E}}\bigr]^{2}.
    \end{split}
\end{gather*}

It may be noted that the first summation in the expression for variance corresponds to the within-component and the sum of the second two to the between-component contributions to the total variance. Higher moments of the XDF can be computed in like manner.

\section{Elementary materials}
\label{sec:elementary-materials}

Certain of the many possible forms of matter occupy a place of special importance in our theoretical understanding of structure in many-particle systems. In what follows we use the distribution function introduced above to study the structure of the most elementary manifestations of the three conventional states of matter: the ideal gas, the perfect crystal, and the simple liquid.

\subsection{Ideal gas}
\label{subsec:ideal-gas}

The usual starting point for discussing structure in particle systems is the bulk ideal gas, wherein the absence of interactions and the negligibility of boundaries result in a trivial behavior that nonetheless proves instructive.

Let us begin by considering the distribution of the angle between the neighbor vectors of an ideal gas particle. Vectors of this kind are analogous to bonds in systems with interactions and will therefore be referred to as such.

Absent interactions and boundaries, individual bonds exhibit no preference for any particular direction in space. However, as seen in Fig.~\ref{fig:ideal-gas-angles}(a), fixing one bond $b$ renders the probability of observing the second bond $b'$ at an angle $\Theta \leq \vartheta$ from $b$ proportional to the spherical cap surface area
\[
    A_\vartheta = 2\pi r^2 (1-\cos \vartheta).
\]
Normalizing $A_\vartheta$ by the total area of the sphere gives the cumulative probability 
\begin{align*}
    P\{ \Theta \leq \vartheta \} \nonumber &= \frac{A_\vartheta}{4\pi r^2} \\ 
                                           &= \frac{1-\cos 2\vartheta}{2}.
\end{align*}
Differentiating the above expression yields the ensemble bond angle distribution for the bulk ideal gas:
\begin{align}
    f_\Theta(\vartheta) &= \frac{d}{d\vartheta} P\{\Theta \leq \vartheta\} \nonumber \\
    &= \frac{1}{2} \sin \vartheta, \quad 0^\circ < \vartheta \leq 180^\circ.
    \label{eq:ideal-gas-entropy}
\end{align}
Fig.~\ref{fig:ideal-gas-angles}(b) illustrates this distribution.

\begin{figure}[b]
    \centering
    \begin{tabular}{cc}
        (a) & (b) \\ \\
        \includegraphics[width=0.5\columnwidth]{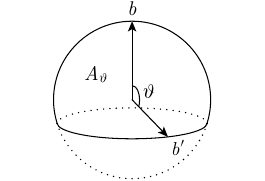} &
        \begin{tikzpicture}
            \begin{axis}[
                width=0.5\columnwidth,
                height=0.4\columnwidth,
                xmin=0,
                xmax=195,
                ymin=0,
                ymax=1.2,
                xtick={90, 180},
                xticklabels={$90^\circ$, $180^\circ$},
                extra x ticks={0},
                extra x tick labels={$0^\circ$},
                ytick=\empty,
                xlabel={$\vartheta$},
                ylabel={$f_\Theta(\vartheta)$},
                samples=720,
                axis lines=middle,
                axis line style={->,>=stealth},
                xlabel style={at={(axis description cs:1,0.00)}, anchor=west},
                ylabel style={at={(axis description cs:0,0.98)}, anchor=south}
            ]
                \addplot[
                    black,
                    semithick,
                    domain=0:180
                ]
                {sin(x)};
            \end{axis}
        \end{tikzpicture}
    \end{tabular}
    \caption{(a) In an ideal gas, the probability of two notional bonds $b$ and $b'$ making an angle $\Theta \leq \vartheta$ is proportional to the spherical cap area $A_\vartheta$. (b) The ensemble bond angle distribution of an ideal gas particle is given by a sine function.}
    \label{fig:ideal-gas-angles}
\end{figure}

We now turn to the coordination number $k$. Recall that an ideal gas of density $\rho$ can be regarded as the point-particle limit of a hard-sphere fluid with that same density. Indeed as the particle radius $r_0$ approaches zero the hard-sphere fluid converges to a homogeneous Poisson point process \cite{Chiu2013}, for which the probability of finding $N$ particles in a region of volume $V$ is equal to
\begin{equation*}
    \frac{(\rho V)^N}{N!} \exp( -\rho V).
\end{equation*}

It is an empirical observation that the location $r_1$ of the first RDF minimum and the particle radius $r_0$ satisfy the inequality $r_1 \leq 4r_0$ across packing fractions in three-dimensional hard-sphere fluids \cite{Rintoul1996, Royall2007, Pieprzyk2019}. Let $K$ be the number of particles within a radial distance of $4 r_0$. Then
\begin{align*}
    P\{k \geq 2\} &\leq P\{K \geq 2\} \\
    &\xrightarrow[r_0\to0]{} 1 - (1+\rho
    \abs{B(4 r_0)})\exp(-\rho \abs{B(4 r_0)}) \\
    &= 0,
\end{align*}
where $\abs{B(r)}$ denotes the volume of a ball with radius $r$.
It follows that
\begin{align*}
    P\{\mathcal{E} = 0\} &\geq P\{k < 2\} \nonumber \\
                         &= 1-P\{k\geq2\} \nonumber \\
                         &=\xrightarrow[r_0\to0]{} 1.
\end{align*}

Thus the behavior of $k$ by itself implies that the XDF of an ideal gas particle is a single point mass at zero:
\begin{equation}
    p_\mathcal{E}(\varepsilon) =
    \begin{cases}
        1 & \varepsilon = 0 \\
        0 & \varepsilon > 0.
    \end{cases}
\end{equation}
This is consistent with the traditional understanding that the ideal gas is completely devoid of order and structure.

\subsection{Perfect crystal}
\label{subsec:perfect-crystal}

Often held in contrast with the ideal gas is the perfect crystal, in which strong interparticle interactions give rise to a periodic arrangement of equilibrium positions about which particles undergo thermal motion. At zero temperature \footnote{Neglecting zero-point motion}, a perfect crystal with one crystallographically independent site will have an XDF that is degenerate at a certain value $\mathcal{E}_0$. Below we calculate this value for six such crystals, consider the qualitative behavior of the first two XDF moments at positive temperatures, and briefly discuss the treatment of temperature gradients.

\subsubsection{Zero temperature}
\label{subsubsec:zero-temperature}

The dense configuration that results from the stacking of hexagonal layers in an ABC sequence is the face-centered cubic (fcc) crystal structure. Such a structure attains the maximum packing fraction $\varphi$ \cite{Hales2005} and the largest point group $G$ of any periodic arrangement in Euclidean $3$-space \cite{Kaxiras2019}. Its $\mathcal{E}_0$ is calculated at $4.221$ bits.

By instead stacking hexagonal layers in a truncated AB sequence, one obtains the hexagonal close-packed (hcp) crystal structure. Hcp is equal in packing fraction to fcc but has a point group that is only half as large. Repeating the calculation for hcp, it is found that the latter distinction is reflected in a smaller $\mathcal{E}_0$ of $3.835$ bits.

If in addition to truncating the stacking sequence to AB, one uses square rather than hexagonal layers, a body-centered cubic (bcc) crystal structure is obtained. Bcc has the same octahedral point group as fcc but packs particles less efficiently. Evaluating the parameter for bcc, taking $k=14$ as most often done \cite{Frank1958a, Rintoul1996, Ackland2006, Thompson2015, Hwang2019}, we find that a lower packing fraction is likewise reflected in a smaller $\mathcal{E}_0$ as compared to fcc at $4.053$ bits.

Stacking square layers in the trivial AA sequence gives rise to a simple cubic (sc) crystal structure. Sc also possesses the octahedral point symmetry seen in fcc and bcc but has the lowest packing fraction of the three structures. This additional reduction in packing efficiency is observed to produce a still smaller $\mathcal{E}_0$ of $3.185$ bits. 

Table~\ref{tab:crystal-structures} compares the above four crystal structures and two others. Across all pairs of structures it is observed that every increase in $\mathcal{E}_0$ is explained by a corresponding increase in packing fraction or point group order. By contrast the widely used Steinhardt rotational invariant $q_6$ \cite{Steinhardt1983} lacks this property for five of those pairs, namely ($\alpha$-As, sc), (dc, sc), (dc, hcp), (dc, bcc), and (dc, fcc). It is noteworthy that the related invariants $q_4$ and $w_6$ are found upon further examination to be so afflicted in at least as many instances.

\renewcommand{\arraystretch}{1.1}
\begin{table}[b]
    \centering
    \begin{ruledtabular}
        \begin{tabular}{lrrrrrr}
            \noalign{\vskip 2pt}
            Crystal structure
            & \multicolumn{1}{c}{$\varphi$}
            & \multicolumn{1}{c}{$\abs{G}$}
            & \multicolumn{1}{c}{$q_6$}
            & \multicolumn{1}{c}{$k$}
            & \multicolumn{1}{c}{$H(\theta)$}
            & \multicolumn{1}{c}{$\mathcal{E}_0$}
            \\
            \noalign{\vskip 3pt}
            \hline
            \noalign{\vskip 4pt}
            Rhombohedral ($\alpha$-As)   & $0.39$ & $12$ & $0.42$ &  $6$ & $1.371$ & $2.536$ \\
            Diamond cubic (dc)           & $0.34$ & $24$ & $0.62$ &  $3$ & $0.000$ & $2.585$ \\
            Simple cubic (sc)            & $0.52$ & $48$ & $0.35$ &  $6$ & $0.722$ & $3.185$ \\
            Hexagonal close-packed (hcp) & $0.74$ & $24$ & $0.49$ & $12$ & $2.209$ & $3.835$ \\
            Body-centered cubic (bcc)    & $0.68$ & $48$ & $0.50$ & $14$ & $2.455$ & $4.053$ \\
            Face-centered cubic (fcc)    & $0.74$ & $48$ & $0.57$ & $12$ & $1.823$ & $4.221$ \\
            \noalign{\vskip 1pt}
        \end{tabular}
    \end{ruledtabular}
    \caption{The packing fraction $\varphi$, point group order $\abs{G}$, rotational invariant $q_6$, coordination number $k$, bond angle entropy $H(\theta)$, and extracopularity value $\mathcal{E}_0$ for six crystal structures with one crystallographically independent site.}
    \label{tab:crystal-structures}
\end{table}

The results of Sec.~\ref{subsec:quantifying-order} suggest that the local degree of order grows with an increase or control of the bond count $k$ and a control or decrease of the bond angle entropy $H(\theta)$. The interplay between more bonding and less bond angle diversity can be made explicit by rewriting the formula of Lemma~\ref{lem:bayes} so that $k$ and $H(\theta)$ appear in separate factors:
\begin{equation}
    \mathcal{E} = \log_2\bigl(\tbinom{k}{2} \, 2^{-H(\theta)} \bigr).
    \label{eq:rectangular}
\end{equation}
In so doing we recast $\mathcal{E}$ as the logarithm of the area of a rectangle with height $\binom{k}{2}$ and width $2^{-{H(\theta)}}$. Fig.~\ref{fig:rectangular-areas} illustrates this geometric interpretation for five of the crystal structures in Table~\ref{tab:crystal-structures}.

A perfect icosahedral crystal, while incompatible with translational symmetry in Euclidean space, can nevertheless be realized in a space of constant positive curvature \cite{Turci2017, Nelson2002}. The optimality of icosahedral clusters in symmetry, energy, and packing leads us to ask whether a crystal composed of such clusters would have the largest $\mathcal{E}_0$. We capture the implied frontier by setting Eq.~\eqref{eq:rectangular} equal to the value of $\mathcal{E}$ obtained in Example~\ref{example:ico}. This produces the hyperbola seen in Fig.~\ref{fig:rectangular-areas} and given as follows: 
\begin{equation}
    \tbinom{k}{2} \, 2^{-H(\theta)} = 6 \cdot 5^{10/11}, \quad k \geq 2, \quad H(\theta) \geq 0.
\label{eq:hyperbola}
\end{equation}

By Proposition~\ref{prop:icosahedral} the sharp lower bound $E$ of the parameter $\mathcal{E}$ is maximized by the regular icosahedral geometry for spherical bond sets in $D=3$ with $k \leq 12$. And in Table~\ref{tab:geometries-1} we see that $E$ and $\mathcal{E}$ induce identical orderings for the geometries therein considered. It is therefore reasonable to conjecture that $\mathcal{E}$ too is maximized by the icosahedral arrangement under those same conditions.

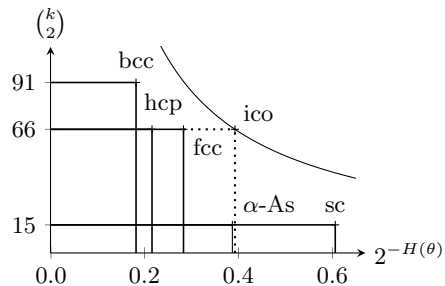
\begin{figure}[t]
    \centering
    \begin{tikzpicture}
        \begin{axis}[
            width=0.667\columnwidth,
            height=0.500\columnwidth,
            xmin=0,
            xmax=0.67,
            ymin=0,
            ymax=110,
            xtick={0.001, 0.2, 0.4, 0.6},
            xticklabels={$0.0$, $0.2$, $0.4$, $0.6$},
            ytick={15, 66, 91},
            xlabel={$2^{-H(\theta)}$},
            ylabel={$\binom{k}{2}$},
            axis lines=middle,
            axis line style={->,>=stealth},
            xlabel style={at={(axis description cs:1,0)}, anchor=west},
            ylabel style={at={(axis description cs:0,1)}, anchor=south},
            domain=0:0.65,
            samples=200
        ]
            \addplot[black] {6*5^(10.0/11.0)/x};
            
            \draw[black, semithick] (axis cs:0,91) -- (axis cs:0.182,91);
            \draw[black, semithick] (axis cs:0.182,0) -- (axis cs:0.182,91);
        
            \draw[black, semithick] (axis cs:0,66) -- (axis cs:0.283,66);
            \draw[black, semithick] (axis cs:0.283,0) -- (axis cs:0.283,66);
        
            \draw[black, semithick] (axis cs:0,66) -- (axis cs:0.216,66);
            \draw[black, semithick] (axis cs:0.216,0) -- (axis cs:0.216,66);
        
            \draw[black, semithick] (axis cs:0,15) -- (axis cs:0.606,15);
            \draw[black, semithick] (axis cs:0.606,0) -- (axis cs:0.606,15);
        
            \draw[black, semithick] (axis cs:0,15) -- (axis cs:0.387,15);
            \draw[black] (axis cs:0.387,0) -- (axis cs:0.387,15);
        
            \draw[black, thick, dotted] (axis cs:0.283,66) -- (axis cs:0.3925,66);
            \draw[black, thick, dotted] (axis cs:0.3925,0) -- (axis cs:0.3925,66);
        
            \addplot[
                only marks,
                mark=+,
                mark size=1.5pt,
                fill=black
            ]
            coordinates {
                (0.182,91)
                (0.283,66)
                (0.216,66)
                (0.606,15)
                (0.387,15)
                (0.3925,66)
            };
        
            \node[anchor=south] at (axis cs:0.182,94) {bcc};
            \node[anchor=west]  at (axis cs:0.285,56.5) {fcc};
            \node[anchor=south] at (axis cs:0.24,69)  {hcp};
            \node[anchor=south west] at (axis cs:0.3925,66) {ico};
            \node[anchor=south] at (axis cs:0.606,17) {sc};
            \node[anchor=south] at (axis cs:0.465,17) {$\alpha$-As};
        \end{axis}
    \end{tikzpicture}
    \caption{The extracopularity parameter, interpreted geometrically as the logarithm of the area of a rectangle. Crystal structures are compared to the icosahedral value and its ``isoextracopularity'' curve. The rectangle for dc is omitted, as it exceeds the depicted range, with its upper right vertex at ${(1,6)}$.}
    \label{fig:rectangular-areas}
\end{figure}

\subsubsection{Positive temperatures}
\label{subsubsec:positive-temperatures}

On account of thermal motion, even a perfect crystal with a single crystallographically independent site will exhibit nonzero variation in $\mathcal{E}$ at positive temperature. Since in any such crystal each atom has a constant $k \geq 2$, the XDF simplifies to
\begin{align}
    p_\mathcal{E}(\varepsilon) &= P\bigl\{ H(\theta) = \log_2\mathopen{}\tbinom{k}{2} - \varepsilon \bigr\} \nonumber.
\end{align}
Using Lemma~\ref{lem:bayes} and Bienaym\'{e}'s identity gives
\begin{align*}
    \operatorname{Var}(\mathcal{E}) &= 
    \operatorname{Var}\bigl(\log_2\mathopen{}\tbinom{k}{2}\bigr) + \operatorname{Var}(H(\theta)) \\
    &\qquad \qquad -2\operatorname{Cov}\bigl(\log_2\mathopen{}\tbinom{k}{2}, H(\theta)\bigr)  \\
    &= \operatorname{Var}(H(\theta)).
\end{align*}
It is now readily seen that for $T>0$, 
\begin{align}
    \operatorname{Var}(\mathcal{E}_T) &= \operatorname{Var}(H(\theta_T)) \nonumber \\
    &> 0 \nonumber \\
    &= \operatorname{Var}(H(\theta_0)) \nonumber \\
    &= \operatorname{Var}(\mathcal{E}_0).
\end{align}

One expects to see the opposite behavior in the mean. It follows from Lemma~\ref{lem:bayes} and the linearity of the expected value that
\begin{equation}
    \langle \mathcal{E} \rangle = \bigl\langle \log_2\mathopen{}\tbinom{k}{2} \bigr\rangle - \langle H(\theta) \rangle.
    \label{eq:expected-extracopularity}
\end{equation}
We expand the ensemble average of the microscopic bond angle entropy as follows:
\begin{equation*}
    \langle H(\theta)\rangle = H(\Theta) - \frac{m-1}{2\binom{k}{2} \ln 2} - \chi(k, m, f_\Theta),
    \label{eq:bond-angle-entropy-expansion}
\end{equation*}
where the first term is the entropy of the discretized ensemble bond angle distribution [Eq.~\eqref{eq:disc-ensemble-bond-angle-distrib}], the second is the first-order bias correction under independent angles \cite{Miller1955}, and the third encompasses higher-order corrections for bias and angle dependence per Eq.~\eqref{eq:joint-ensemble-bond-angle-distrib}.
For a uniform $M$-bin discretization one has
\begin{equation*}
    H(\Theta) = \frac{h(\Theta)}{\ln 2} + \log_2 \frac{180^\circ}{M} + o(1).
\end{equation*}
Call the remainder
\begin{equation*}
    r = H(\Theta) - \biggl(\frac{h(\Theta)}{\ln 2} + \log_2 \frac{180^\circ}{M}\biggr).
\end{equation*}
Inserting the above into Eq.~\eqref{eq:expected-extracopularity} and differentiating the result with respect to temperature yields
\begin{equation}
    \frac{\partial \langle \mathcal{E} \rangle}{\partial T} = -\frac{1}{\ln2}\frac{\partial h(\Theta)}{\partial T} - \frac{\partial r}{\partial T} + \frac{\partial \chi}{\partial T},
\end{equation}
where we have used the fact that $k$, $m$, and $M$ are here constant. A crystalline bond angle distribution consists of finitely many peaks broadening slowly with temperature, so that
\begin{equation*}
    \frac{\partial r}{\partial T} \approx 0 \quad \text{and} \quad \frac{\partial \chi}{\partial T} \approx 0.
\end{equation*}
Applying these and recalling Ineq.~\eqref{ineq:temperature-dependence} yields
\begin{align}
    \frac{\partial \langle \mathcal{E} \rangle}{\partial T} &\approx -\frac{1}{\ln2}\frac{\partial h(\Theta)}{\partial T} \nonumber \\ &<0,
\end{align}
which agrees with our intuition.

\subsubsection{Temperature gradient}
\label{subsubsec:temperature-gradient}

Crystals as well as solids of other kinds are often out of thermal and indeed thermodynamic equilibrium. To facilitate the study of structure in nonequilibrium states we may write the extracopularity parameter as a scalar field $x \mapsto \mathcal{E}(x)$ defined by
\begin{equation*}
\mathcal{E}(x) = \sum_{i=1}^N \mathcal{E}_i \delta(x - x_i).
\end{equation*}

The spatial variation in the local degree of order at a specified length scale $\sigma$ can then be resolved by coarse-graining the microscopic field $\mathcal{E}$ through its convolution with the $D$-dimensional Gaussian kernel
\begin{equation*}
g_\sigma(x) = \frac{1}{(2\pi \sigma^2)^{D/2}} \exp\biggl(-\frac{\abs{x}^2}{2\sigma^2}\biggr).
\end{equation*}
The resulting coarse-grained field is given by
\begin{align*}
\mathcal{E}_\sigma(x) &= (g_\sigma * \mathcal{E})(x) \nonumber \\
&= \int g_\sigma(x - x') \mathcal{E}(x') dx' \nonumber \\
&= \sum_{i=1}^N \mathcal{E}_i g_\sigma(x - x_i).
\end{align*}
An example of this field at the scale of the lattice constant is provided in Fig.~\ref{fig:radial-temp-gradient}.

\begin{figure}[b]
    \centering
    \includegraphics[width=\columnwidth]{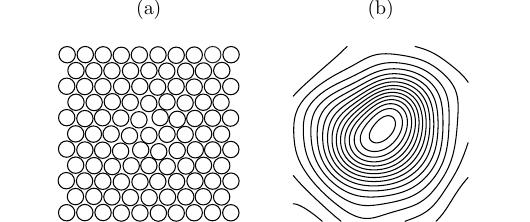}
    \caption{Extracopularity field in two dimensions: (a) a hexagonal crystal with a negative radial temperature gradient, decreasing outward from $T_1 \approx T_m$ at the center to $T_2 \approx 0$ at the boundary of the square region; (b) the coarse-grained extracopularity field $\mathcal{E}_\sigma$ with scale $\sigma$ equal to the lattice constant, decreasing inward from the zero-temperature value of $\mathcal{E}_0 \approx 2.385$ bits at the boundary in increments of $0.05$ bits.}
    \label{fig:radial-temp-gradient}
\end{figure}

\subsection{Simple liquid}
\label{subsec:simple-liquid}
 
The study of the structure of matter in the liquid state is made difficult by the fact such systems admit neither the idealization of independent particle positions nor that of periodic particle arrangement. To address this difficulty, various notions of a ``simple liquid'' have been put forward \cite{Callen1985, Hansen2013, Ingebrigtsen2012}, all centering on a homogeneous condensed fluid with isotropic interactions. In the discussion that follows we consider bulk liquid argon near its triple point, which has long been recognized to exemplify that description.

\subsubsection{Coordination process}
\label{subsubsec:coordination-process}

We begin by conceiving of coordination as a local process in which nearby atoms compete to lie within the vicinity of a reference particle. Regarding this process as sampling without replacement is equivalent to modeling the coordination number $k$ to be hypergeometrically distributed:
\begin{multline}
    \hat{p}_k(\kappa; \Lambda, \lambda, K) = \frac{\binom{\lambda}{\kappa}\binom{\Lambda-\lambda}{K-\kappa}}{{\binom{\Lambda}{K}}}, \\ \max\{0, K+\lambda-\Lambda\} \leq \kappa \leq \min\{K, \lambda\};
    \label{eq:hypergeometric-pmf}
\end{multline}
where the $\Lambda$ (population size), $\lambda$ (number of possible successes), and $K$ (number of draws) are positive integers satisfying $\lambda \leq \Lambda$ and $K \leq \Lambda$ \cite{Greene2017}. Each of these parameters has a clear physical interpretation in the present setting aside from $K$, which we later obtain by using the first moment formula
\begin{align}
    \mu_k &= \mathbb{E}[k] \nonumber \\
    &=\frac{\lambda}{\Lambda} K.
\label{eq:hypergeometric-mean}
\end{align}

The number $\lambda$ of possible successes captures the probabilistic upper bound imposed on the coordination number by packing constraints. For the purposes of determining its value we treat atoms as hard spheres.

Let $r_0$ be half the radial distance to the first RDF maximum (particle radius), $r_1$ the radial distance to the first RDF minimum (interaction radius), and $r_2$ that to its second minimum (outer boundary), each illustrated in Fig.~\ref{fig:hard-sphere-vicinity}(a). Then the volume of the vicinity may be approximated as
\begin{align*}
    V &= \frac{4}{3} \pi (r_1^3 - r_0^3) \nonumber \\
      &\approx \frac{4}{3} \pi [(3r_0)^3 - r_0^3] \nonumber \\
      &= \frac{104}{3} \pi r_0^3.
\end{align*}

\begin{figure}[b]
    \centering
    \includegraphics[width=\columnwidth]{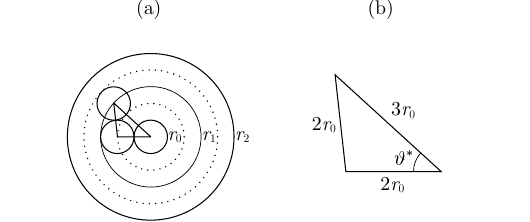}
    \caption{Atomic vicinity in the hard-sphere treatment: (a) the boundaries of key regions; (b) the geometry of the minimum bond angle configuration of an atom and two of its neighbors.}
    \label{fig:hard-sphere-vicinity}
\end{figure}

The maximum packing fraction of an irregular arrangement of identical hard spheres is widely reported to be $\varphi \approx 0.64$  \cite{Bernal1960c,Scott1969,Rintoul1996,Zaccone2022}. The value of $\lambda$ implied by this fraction is determined, up to rounding, by the ratio of the volume of all atoms that lie in the vicinity to that of a single such atom:

\begin{equation}
    \lambda = \biggl\lfloor \frac{ \varphi V}{(4/3)\pi r_0^3} \biggr\rceil = 17,
    \label{eq:num-possible-successes}
\end{equation}
where $\lfloor x\rceil$ denotes the integer nearest to $x$.

The first moment $\mu_k$ corresponds to the average number of atoms within the interaction radius $r_1$. We take the population size $\Lambda$ to be that number within the outer boundary $r_2$. At a temperature of $85$ K and a pressure of $1$ atm \cite{Yarnell1973, Hansen2013, Anikeenko2018} this gives
\begin{equation}
    \begin{gathered}
    \mu_k = 4\pi \rho \int_{r_0}^{r_1} r^2 g(r) dr \approx 12.2, \\
    \Lambda = 4\pi \rho \int_{r_0}^{r_2} r^2 g(r) dr \approx 54.
    \end{gathered}
    \label{eq:rdf-integrals}
\end{equation}
These two values coincide to the nearest integer with the numbers of atoms in the first and first two shells of the Mackay icosahedral cluster, respectively \cite{Mackay1962,Martin1996,Canestrari2025}. 

Now solving Eq.~\eqref{eq:hypergeometric-mean} for $K$ with $\lambda$, $\mu_k$, and $\Lambda$ as given above we obtain
\begin{equation}
    K = \biggl\lfloor \frac{\Lambda}{\lambda} \mu_k \biggr\rceil = 39.
    \label{eq:num-draws}
\end{equation}

Inserting Eqs.~\eqref{eq:num-possible-successes}--\eqref{eq:num-draws} into  Eq.~\eqref{eq:hypergeometric-pmf} produces the following model of the coordination number distribution:
\begin{equation}
    \hat{p}_k(\kappa) = 
    \begin{cases}
        \dfrac {\binom{17}{\kappa}\binom{37}{39-\kappa}}{\binom{54}{39}} & 2 \leq \kappa \leq 17 \\
        0 & \text{otherwise}.
    \end{cases}
    \label{eq:binomial}
\end{equation}

\subsubsection{Bond angle model}
\label{subsubsec:bond-angle-model}

Liquid argon is distinguished from an ideal gas of the same number density primarily by the presence of strong repulsive forces \cite{Weeks1971}. Such forces discourage small interatomic distances, thereby suppressing the occurrence of sharp angles. A basic model of the ensemble bond angle distribution may therefore be derived by truncating the ideal gas solution in Eq.~\eqref{eq:ideal-gas-entropy} as follows:
\begin{equation}
    \hat{p}_\Theta(\vartheta_i) =
    \begin{cases}
        0 & \vartheta_i < \vartheta^* \\
        \sin\vartheta_i / \hat{Z}_\Theta & \vartheta_i \geq \vartheta^*,
    \end{cases}
    \label{eq:ensemble-bond-angle-distribution}
\end{equation}
where $\vartheta^*$ is a cutoff angle, $\hat{Z}_\Theta$ is the normalizer, and the bin-center angles $\vartheta_i$ are given for integers $1 \leq i \leq M$ by
\begin{equation}
    \vartheta_i = 180^\circ \biggl(\frac{2i-1}{2M}\biggr).
    \label{eq:discretization-2}
\end{equation}

The cutoff angle $\vartheta^*$ is determined by appealing to the hard-sphere treatment a second time. As illustrated in Fig.~\ref{fig:hard-sphere-vicinity}(a) and (b), hard spheres produce the smallest attainable bond angle when the central atom combines with two other atoms within a distance of $r_1$ to form an isosceles triangle of side lengths $a=b=2r_0$ and $c=3r_0$. By elementary trigonometry the cosine of its obtuse angle is computed as
\begin{equation*}
    \frac{a^2 + b^2 - c^2}{2ab} = -\frac{1}{8}.
\end{equation*}
Solving for the acute angle yields
\begin{equation}
    \vartheta^* = \frac{180^\circ-\arccos(1/8)}{2} \approx 41.41^\circ.
\end{equation}
Fig.~\ref{fig:truncated-sine-model} depicts the truncation at this angle.

We determine the number $M$ of bins by considering the scale of bond angle fluctuations. Recall that the Lindemann parameter $L$ gives the root-mean-square value of atomic displacements relative to the nearest-neighbor distance at the melting point \cite{Lindemann1910, Lunkenheimer2023}. The corresponding bond angle fluctuations are hence of order
\begin{align*}
    \Delta \theta &= \sqrt{L^2 + L^2} \\
                  &= \sqrt{2}L~\text{rad}.
\end{align*}

It will be seen that $\Delta \theta$ serves as a lower bound on the scale of those same fluctuations in liquid state. Requiring the width of each bin to equal $\Delta \theta$ is therefore tantamount to ensuring that the typical fluctuation of the average bond angle class is registered. This requirement is captured by the function
\begin{align*}
    F : L \mapsto \frac{\pi}{\sqrt{2}L}.
\end{align*}
Reported values of $L$ lie in the range $0.10$--$0.15$ \cite{Saija2006, Oettel2010, Novikov2013}. We assign to $M$ the integer closest to the average of the values of $F$ at the minimum and maximum of this range:
\begin{equation}
    M = \biggl\lfloor\frac{F(0.10) + F(0.15)}{2}\biggr\rceil = 19.
    \label{eq:bin-count}
\end{equation}

Using Eq.~\eqref{eq:bin-count} in Eq.~\eqref{eq:discretization-2} it is found that the inequality $\vartheta_i \geq \vartheta^* \approx 41.41^\circ$ is satisfied for indexes $i \geq 5$. The resulting bond angle model is given in closed form by
\begin{align}
    \hat{p}_\Theta(\vartheta_i) &= \frac{1}{\hat{Z}_\Theta}\sin\vartheta_i, \nonumber \\
    \hat{Z}_\Theta 
    &= \sum_{i=5}^{19} \sin\vartheta_i \approx 10.832; \nonumber \\
    \vartheta_i 
    &= 180^\circ \biggl( \frac{2i-1}{38} \biggr), \nonumber \\[1ex]
    i &= 5, 6, \dots, 19. \label{eq:bond-angle-model}
\end{align}

\begin{figure}[t]
    \centering
    \begin{tikzpicture}
        \begin{axis}[
            width=0.5\columnwidth,
            height=0.4\columnwidth,
            xmin=0,
            xmax=195,
            ymin=0,
            ymax=1.2,
            xtick={0, 41.41, 180},
            xticklabels={0, $41.41^\circ$, $180^\circ$},
            ytick=\empty,
            xlabel={$\vartheta$},
            ylabel={$p_\theta(\vartheta)$},
            samples=720,
            axis lines=middle,
            axis line style={->,>=stealth},
            xlabel style={at={(axis description cs:1,0.00)}, anchor=west},
            ylabel style={at={(axis description cs:0,0.98)}, anchor=south}
        ]
            \addplot[black, very thick, domain=0:41.41] {0};
            \addplot[black, semithick, domain=41.41:180] {sin(x)};
            \draw[black, thick, dotted]
                (axis cs:41.41,0) -- (axis cs:41.41,{sin(41.41)});
            \addplot[only marks, mark=*, mark size=1.6pt, fill=black]
                coordinates {(41.41,{sin(41.41)})};
            \addplot[only marks, mark=o, mark size=1.6pt, line width=0.5pt]
                coordinates {(41.41,0)};
        \end{axis}
    \end{tikzpicture}
    \caption{The left-truncated sine model of the ensemble bond angle distribution for bulk liquid argon.}
    \label{fig:truncated-sine-model}
\end{figure}
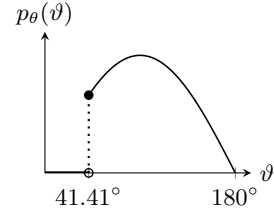

\subsubsection{XDF}
\label{subsubsec:xdf}

Combining the formula obtained in Sec.~\ref{subsec:distribution-function} with the models of the coordination number and bond angle distribution developed in Secs.~\ref{subsubsec:coordination-process} and \ref{subsubsec:bond-angle-model} respectively, we arrive at an expression for the unnormalized XDF of an atom in bulk liquid argon at $85$ K and $1$ atm:
\begin{multline}
    \hat{Z}_\mathcal{E}\hat{p}_\mathcal{E}(\varepsilon) = \boldsymbol{1}\{\varepsilon = 0\}\sum_{\kappa = 0}^2 p_k(\kappa) \\[1ex]
    +\frac{1}{\binom{54}{39}}\sum_{\kappa = 3}^{17} \frac{\tbinom{17}{\kappa}\tbinom{37}{39-\kappa}}{\mathcal{S}(\kappa)} \exp\biggl\{{\biggl(\frac{\mathcal{H}(\varepsilon, \kappa)-\mathcal{M}(\kappa)}{\sqrt{2} \,\mathcal{S}(\kappa)}\!\biggr)^2}\biggr\}, \\[2ex]
    \varepsilon \in \bigl\{ \log_2\mathopen{}\tbinom{a}{2} - b : a \in \mathbb{N}, 2 \leq a \leq 17,  b \in \mathbb{H}_a \bigr\};
    \label{eq:xdf-argon}
\end{multline}
where the set $\mathbb{H}_a$ is as defined in Eq.~\eqref{eq:H-set} and the auxiliary functions $\mathcal{H}$, $\mathcal{M}$, and $\mathcal{S}$ are given respectively by
\begin{align*}
    \mathcal{H}(\varepsilon, \kappa) &= \log_2\mathopen{}\tbinom{\kappa}{2} - \varepsilon, \nonumber \\[2ex]
    \mathcal{M}(\kappa) &= \hat{\mu}_{H(\theta)|k}(\kappa) \\[0.5ex]
    &\approx\max\biggr\{0, \, 3.771 - \frac{14}{\kappa(\kappa-1)\ln2} \biggr\}, \nonumber \\
    \mathcal{S}(\kappa) &= \hat{\sigma}_{H(\theta)|k}(\kappa) \\ 
    &\approx\sqrt{\frac{0.123}{\kappa(\kappa-1)}}. 
\end{align*}
This expression is plotted in Fig.~\ref{fig:liquid-argon-xdf}.

\begin{figure}[b]
    \centering
    \begin{tikzpicture}
        \begin{axis}[
            width=1.0\columnwidth,
            height=0.4\columnwidth,
            xmin=1.3, 
            xmax=3.6,
            ymin=0, 
            ymax=.021,
            xtick={1.0, 1.5, 2.0, 2.5, 3.0, 3.5, 4.0},
            xticklabels={1.0, 1.5, 2.0, 2.5, 3.0, 3.5, 4.0},
            ytick=\empty,
            minor x tick num=4,
            xlabel={$\varepsilon$},
            ylabel={$\hat{Z}_\mathcal{E} \hat{p}_\mathcal{E}(\varepsilon)$},
            axis lines=middle,
            axis line style={->,>=stealth},
            xlabel style={at={(axis description cs:1,0)}, anchor=west},
            ylabel style={at={(axis description cs:0,1)}, anchor=south}
        ]
            \addplot[
                 mark=none,
                 semithick
            ]
            coordinates {
                (0,0)
                (0.01,0)
                (0.02,0)
                (0.03,0)
                (0.04,0)
                (0.05,0)
                (0.06,0)
                (0.07,0)
                (0.08,0)
                (0.09,0)
                (0.1,0)
                (0.11,0)
                (0.12,0)
                (0.13,0)
                (0.14,0)
                (0.15,0)
                (0.16,0)
                (0.17,0)
                (0.18,0)
                (0.19,0)
                (0.2,0)
                (0.21,0)
                (0.22,0)
                (0.23,0)
                (0.24,0)
                (0.25,0)
                (0.26,0)
                (0.27,0)
                (0.28,0)
                (0.29,0)
                (0.3,0)
                (0.31,0)
                (0.32,0)
                (0.33,0)
                (0.34,0)
                (0.35,0)
                (0.36,0)
                (0.37,0)
                (0.38,0)
                (0.39,0)
                (0.4,0)
                (0.41,0)
                (0.42,0)
                (0.43,0)
                (0.44,0)
                (0.45,0)
                (0.46,0)
                (0.47,0)
                (0.48,0)
                (0.49,0)
                (0.5,0)
                (0.51,0)
                (0.52,0)
                (0.53,0)
                (0.54,0)
                (0.55,0)
                (0.56,0)
                (0.57,0)
                (0.58,0)
                (0.59,0)
                (0.6,0)
                (0.61,0)
                (0.62,0)
                (0.63,0)
                (0.64,0)
                (0.65,0)
                (0.66,0)
                (0.67,0)
                (0.68,0)
                (0.69,0)
                (0.7,0)
                (0.71,0)
                (0.72,0)
                (0.73,0)
                (0.74,0)
                (0.75,0)
                (0.76,0)
                (0.77,0)
                (0.78,0)
                (0.79,0)
                (0.8,0)
                (0.81,0)
                (0.82,0)
                (0.83,0)
                (0.84,0)
                (0.85,0)
                (0.86,0)
                (0.87,0)
                (0.88,0)
                (0.89,0)
                (0.9,0)
                (0.91,0)
                (0.92,0)
                (0.93,0)
                (0.94,0)
                (0.95,0)
                (0.96,0)
                (0.97,0)
                (0.98,0)
                (0.99,0)
                (1,0)
                (1.01,0)
                (1.02,0)
                (1.03,0)
                (1.04,0)
                (1.05,0)
                (1.06,0)
                (1.07,0)
                (1.08,0)
                (1.09,0)
                (1.1,0)
                (1.11,0)
                (1.12,0)
                (1.13,0)
                (1.14,0)
                (1.15,0)
                (1.16,0)
                (1.17,0)
                (1.18,0)
                (1.19,0.0001)
                (1.2,0.0001)
                (1.21,0.0001)
                (1.22,0.0001)
                (1.23,0.0001)
                (1.24,0.0001)
                (1.25,0.0001)
                (1.26,0.0001)
                (1.27,0.0001)
                (1.28,0.0001)
                (1.29,0.0002)
                (1.3,0.0002)
                (1.31,0.0002)
                (1.32,0.0002)
                (1.33,0.0002)
                (1.34,0.0002)
                (1.35,0.0002)
                (1.36,0.0003)
                (1.37,0.0003)
                (1.38,0.0003)
                (1.39,0.0003)
                (1.4,0.0003)
                (1.41,0.0003)
                (1.42,0.0003)
                (1.43,0.0003)
                (1.44,0.0003)
                (1.45,0.0003)
                (1.46,0.0003)
                (1.47,0.0003)
                (1.48,0.0003)
                (1.49,0.0003)
                (1.5,0.0003)
                (1.51,0.0003)
                (1.52,0.0003)
                (1.53,0.0004)
                (1.54,0.0004)
                (1.55,0.0005)
                (1.56,0.0006)
                (1.57,0.0006)
                (1.58,0.0007)
                (1.59,0.0008)
                (1.6,0.0009)
                (1.61,0.001)
                (1.62,0.0011)
                (1.63,0.0012)
                (1.64,0.0013)
                (1.65,0.0013)
                (1.66,0.0014)
                (1.67,0.0014)
                (1.68,0.0014)
                (1.69,0.0014)
                (1.7,0.0014)
                (1.71,0.0013)
                (1.72,0.0013)
                (1.73,0.0012)
                (1.74,0.0012)
                (1.75,0.0011)
                (1.76,0.0011)
                (1.77,0.001)
                (1.78,0.0011)
                (1.79,0.0011)
                (1.8,0.0012)
                (1.81,0.0013)
                (1.82,0.0014)
                (1.83,0.0017)
                (1.84,0.0019)
                (1.85,0.0022)
                (1.86,0.0025)
                (1.87,0.0029)
                (1.88,0.0033)
                (1.89,0.0036)
                (1.9,0.0039)
                (1.91,0.0042)
                (1.92,0.0044)
                (1.93,0.0046)
                (1.94,0.0047)
                (1.95,0.0047)
                (1.96,0.0046)
                (1.97,0.0045)
                (1.98,0.0043)
                (1.99,0.004)
                (2,0.0037)
                (2.01,0.0034)
                (2.02,0.0032)
                (2.03,0.0029)
                (2.04,0.0028)
                (2.05,0.0028)
                (2.06,0.0028)
                (2.07,0.003)
                (2.08,0.0034)
                (2.09,0.0039)
                (2.1,0.0045)
                (2.11,0.0052)
                (2.12,0.0061)
                (2.13,0.0069)
                (2.14,0.0078)
                (2.15,0.0087)
                (2.16,0.0094)
                (2.17,0.01)
                (2.18,0.0104)
                (2.19,0.0106)
                (2.2,0.0106)
                (2.21,0.0103)
                (2.22,0.0099)
                (2.23,0.0093)
                (2.24,0.0085)
                (2.25,0.0077)
                (2.26,0.0069)
                (2.27,0.0062)
                (2.28,0.0055)
                (2.29,0.0051)
                (2.3,0.0049)
                (2.31,0.005)
                (2.32,0.0054)
                (2.33,0.006)
                (2.34,0.007)
                (2.35,0.0082)
                (2.36,0.0095)
                (2.37,0.011)
                (2.38,0.0124)
                (2.39,0.0138)
                (2.4,0.0149)
                (2.41,0.0158)
                (2.42,0.0162)
                (2.43,0.0163)
                (2.44,0.0159)
                (2.45,0.0151)
                (2.46,0.014)
                (2.47,0.0127)
                (2.48,0.0113)
                (2.49,0.0098)
                (2.5,0.0085)
                (2.51,0.0073)
                (2.52,0.0065)
                (2.53,0.006)
                (2.54,0.0059)
                (2.55,0.0062)
                (2.56,0.007)
                (2.57,0.0081)
                (2.58,0.0095)
                (2.59,0.011)
                (2.6,0.0126)
                (2.61,0.0141)
                (2.62,0.0154)
                (2.63,0.0163)
                (2.64,0.0167)
                (2.65,0.0166)
                (2.66,0.016)
                (2.67,0.015)
                (2.68,0.0136)
                (2.69,0.012)
                (2.7,0.0103)
                (2.71,0.0086)
                (2.72,0.0072)
                (2.73,0.006)
                (2.74,0.0051)
                (2.75,0.0047)
                (2.76,0.0047)
                (2.77,0.005)
                (2.78,0.0057)
                (2.79,0.0066)
                (2.8,0.0077)
                (2.81,0.0088)
                (2.82,0.0098)
                (2.83,0.0106)
                (2.84,0.0111)
                (2.85,0.0112)
                (2.86,0.0109)
                (2.87,0.0102)
                (2.88,0.0092)
                (2.89,0.008)
                (2.9,0.0068)
                (2.91,0.0056)
                (2.92,0.0045)
                (2.93,0.0035)
                (2.94,0.0029)
                (2.95,0.0024)
                (2.96,0.0023)
                (2.97,0.0024)
                (2.98,0.0026)
                (2.99,0.003)
                (3,0.0035)
                (3.01,0.0039)
                (3.02,0.0043)
                (3.03,0.0046)
                (3.04,0.0046)
                (3.05,0.0045)
                (3.06,0.0042)
                (3.07,0.0038)
                (3.08,0.0032)
                (3.09,0.0027)
                (3.1,0.0021)
                (3.11,0.0016)
                (3.12,0.0012)
                (3.13,0.0009)
                (3.14,0.0008)
                (3.15,0.0007)
                (3.16,0.0006)
                (3.17,0.0007)
                (3.18,0.0008)
                (3.19,0.0009)
                (3.2,0.001)
                (3.21,0.001)
                (3.22,0.0011)
                (3.23,0.001)
                (3.24,0.001)
                (3.25,0.0008)
                (3.26,0.0007)
                (3.27,0.0006)
                (3.28,0.0004)
                (3.29,0.0003)
                (3.3,0.0002)
                (3.31,0.0002)
                (3.32,0.0001)
                (3.33,0.0001)
                (3.34,0.0001)
                (3.35,0.0001)
                (3.36,0.0001)
                (3.37,0.0001)
                (3.38,0.0001)
                (3.39,0.0001)
                (3.4,0.0001)
                (3.41,0.0001)
                (3.42,0.0001)
                (3.43,0.0001)
                (3.44,0.0001)
                (3.45,0)
                (3.46,0)
                (3.47,0)
                (3.48,0)
                (3.49,0)
                (3.5,0)
                (3.51,0)
                (3.52,0)
                (3.53,0)
                (3.54,0)
                (3.55,0)
                (3.56,0)
                (3.57,0)
                (3.58,0)
                (3.59,0)
                (3.6,0)
                (3.61,0)
                (3.62,0)
                (3.63,0)
                (3.64,0)
                (3.65,0)
                (3.66,0)
                (3.67,0)
                (3.68,0)
                (3.69,0)
                (3.7,0)
                (3.71,0)
                (3.72,0)
                (3.73,0)
                (3.74,0)
                (3.75,0)
                (3.76,0)
                (3.77,0)
                (3.78,0)
                (3.79,0)
                (3.8,0)
                (3.81,0)
                (3.82,0)
                (3.83,0)
                (3.84,0)
                (3.85,0)
                (3.86,0)
                (3.87,0)
                (3.88,0)
                (3.89,0)
                (3.9,0)
                (3.91,0)
                (3.92,0)
                (3.93,0)
                (3.94,0)
                (3.95,0)
                (3.96,0)
                (3.97,0)
                (3.98,0)
                (3.99,0)
                (4,0)
                (4.01,0)
                (4.02,0)
                (4.03,0)
                (4.04,0)
                (4.05,0)
                (4.06,0)
                (4.07,0)
                (4.08,0)
                (4.09,0)
                (4.1,0)
                (4.11,0)
                (4.12,0)
                (4.13,0)
                (4.14,0)
                (4.15,0)
                (4.16,0)
                (4.17,0)
                (4.18,0)
                (4.19,0)
                (4.2,0)
                (4.21,0)
                (4.22,0)
                (4.23,0)
                (4.24,0)
                (4.25,0)
                (4.26,0)
                (4.27,0)
                (4.28,0)
                (4.29,0)
                (4.3,0)
                (4.31,0)
                (4.32,0)
                (4.33,0)
                (4.34,0)
                (4.35,0)
                (4.36,0)
                (4.37,0)
                (4.38,0)
                (4.39,0)
                (4.4,0)
                (4.41,0)
                (4.42,0)
                (4.43,0)
                (4.44,0)
                (4.45,0)
                (4.46,0)
                (4.47,0)
                (4.48,0)
                (4.49,0)
                (4.5,0)
                (4.51,0)
                (4.52,0)
                (4.53,0)
                (4.54,0)
                (4.55,0)
                (4.56,0)
                (4.57,0)
                (4.58,0)
                (4.59,0)
                (4.6,0)
                (4.61,0)
                (4.62,0)
                (4.63,0)
                (4.64,0)
                (4.65,0)
                (4.66,0)
                (4.67,0)
                (4.68,0)
                (4.69,0)
                (4.7,0)
                (4.71,0)
                (4.72,0)
                (4.73,0)
                (4.74,0)
                (4.75,0)
                (4.76,0)
                (4.77,0)
                (4.78,0)
                (4.79,0)
                (4.8,0)
                (4.81,0)
                (4.82,0)
                (4.83,0)
                (4.84,0)
                (4.85,0)
                (4.86,0)
                (4.87,0)
                (4.88,0)
                (4.89,0)
                (4.9,0)
                (4.91,0)
                (4.92,0)
                (4.93,0)
                (4.94,0)
                (4.95,0)
                (4.96,0)
                (4.97,0)
                (4.98,0)
                (4.99,0)
                (5,0)
            };
            \node[anchor=south] at (axis cs:1.68, 0.0014) {$9$};
            \node[anchor=south] at (axis cs:1.95, 0.0047) {$10$};
            \node[anchor=south] at (axis cs:2.19, 0.0106) {$11$};
            \node[anchor=south] at (axis cs:2.43, 0.0163) {$12$};
            \node[anchor=south] at (axis cs:2.64, 0.0167) {$13$}; 
            \node[anchor=south] at (axis cs:2.85, 0.0112) {$14$};
            \node[anchor=south] at (axis cs:3.04, 0.0046) {$15$};
            \node[anchor=south] at (axis cs:3.22, 0.0011) {$16$};
            \addplot[
                ycomb,
                mark=none,
                thick, 
                dotted
            ] 
            coordinates {
                (1.68, 0.0014)
                (1.95, 0.0047)
                (2.19, 0.0106)
                (2.43, 0.0163)
                (2.64, 0.0167)
                (2.85, 0.0112)
                (3.04, 0.0046)
                (3.22, 0.0011)
            };
        \end{axis}
    \end{tikzpicture}
    \caption{Unnormalized XDF for liquid argon at $85$ K and $1$ atm, as predicted by $\hat{Z}_\mathcal{E}\hat{p}_\mathcal{E}$. Modal coordination numbers for the eight prominent peaks are indicated above the curve.}
    \label{fig:liquid-argon-xdf}
\end{figure}
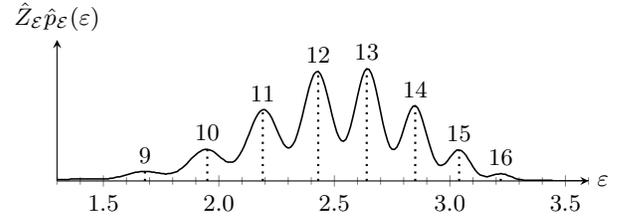

\bibliography{main.bib}

\end{document}